\newcommand\set[1]{\{ #1 \}}
\newcommand\tuple[1]{{( #1 )}}
\newcommand\tuplel[1]{{\langle #1 \rangle}}
\newcommand{\REFlem}[1]{\text{Lem.~\ref{#1}}}
\newcommand{\REFthm}[1]{\text{Thm.~\ref{#1}}}
\newcommand{\REFdef}[1]{Def.~\ref{#1}}
\newcommand{\REFsec}[1]{Sec.~\ref{#1}}
\newcommand{\REFprop}[1]{Prop.~\ref{#1}}
\newcommand{\REFfig}[1]{Fig.~\ref{#1}}
\newcommand{\REFapp}[1]{App.~\ref{#1}}
\newtheorem{theorem}{Theorem}
\newtheorem{definition}{Definition}
\newtheorem{proposition}{Proposition}
\newtheorem{lemma}{Lemma}
\newcommand{\ON}[1]{\operatorname{#1}}
\def\clap#1{\hbox to 0pt{\hss#1\hss}}
\newif\ifFIRST
\newif\ifSECOND
\let\LISTOP\relax
\newcommand{\List}[4][\;]{#3#1%
	\FIRSTtrue
	\@for\i:=#2\do{%
	\ifFIRST\LISTOP{\i}\FIRSTfalse\else,\LISTOP{\i}\fi%
	}%
	#1#4%
	\let\LISTOP\relax
}
\newcounter{DINGLIST}
\newcommand{\markD}[3][\;\;]{\text{\ding{\the\numexpr171+\theDINGLIST}\stepcounter{DINGLIST}}#1#3}
\newcommand{\propNeg}{\@ifstar\propNegStar\propNegNoStar}
\newcommand{\propNegStar}[1]{\ensuremath{\left(\propNegNoStar{#1}\right)}}
\newcommand{\propNegNoStar}[2][\cdot]{\ensuremath{\neg\ifthenelse{\isempty{#2}}{#1}{#2}}}
\newcommand{\propConj}{\@ifstar\propConjStar\propConjNoStar}
\newcommand{\propConjStar}[2]{\ensuremath{\left(\propConjNoStar{#1}{#2}\right)}}
\newcommand{\propConjNoStar}[3][\cdot]{\ensuremath{\ifthenelse{\isempty{#2}}{#1}{#2}\wedge\ifthenelse{\isempty{#3}}{#1}{#3}}}
\newcommand{\propDisj}{\@ifstar\propDisjStar\propDisjNoStar}
\newcommand{\propDisjStar}[2]{\ensuremath{\left(\propDisjNoStar{#1}{#2}\right)}}
\newcommand{\propDisjNoStar}[3][\cdot]{\ensuremath{\ifthenelse{\isempty{#2}}{#1}{#2}\vee\ifthenelse{\isempty{#3}}{#1}{#3}}}
\newcommand{\propImp}{\@ifstar\propImpStar\propImpNoStar}
\newcommand{\propImpStar}[2]{\ensuremath{\left(\propImpNoStar{#1}{#2}\right)}}
\newcommand{\propImpNoStar}[3][\cdot]{\ensuremath{\ifthenelse{\isempty{#2}}{#1}{#2}\Rightarrow\ifthenelse{\isempty{#3}}{#1}{#3}}}
\newcommand{\propAequ}{\@ifstar\propAequStar\propAequNoStar}
\newcommand{\propAequStar}[2]{\ensuremath{\left(\propAequNoStar{#1}{#2}\right)}}
\newcommand{\propAequNoStar}[3][\cdot]{\ensuremath{\ifthenelse{\isempty{#2}}{#1}{#2}\Leftrightarrow\ifthenelse{\isempty{#3}}{#1}{#3}}}
\newcommand{\AllQ}{\@ifstar\AllQStar\AllQNoStar}
\newcommand{\AllQStar}[3][\;]{\ensuremath{\left(\forall #2#1.#1#3\right)}}
\newcommand{\AllQNoStar}[3][\;]{\ensuremath{\forall #2#1.#1#3}}
\newcommand{\AllQu}{\@ifstar\AllQuStar\AllQuNoStar}
\newcommand{\AllQuStar}[3][\;]{\ensuremath{\left(\forall^{\infty} #2#1.#1#3\right)}}
\newcommand{\AllQuNoStar}[3][\;]{\ensuremath{\forall^{\infty} #2#1.#1#3}}
\newcommand{\ExQ}{\@ifstar\ExQStar\ExQNoStar}
\newcommand{\ExQStar}[3][\;]{\ensuremath{\left(\exists #2#1.#1#3\right)}}
\newcommand{\ExQNoStar}[3][\;]{\ensuremath{\exists #2#1.#1#3}}
\newcommand{\NExQ}{\@ifstar\NExQStar\NExQNoStar}
\newcommand{\NExQStar}[3][\;]{\ensuremath{\left(\nexists #2#1.#1#3\right)}}
\newcommand{\NExQNoStar}[3][\;]{\ensuremath{\nexists #2#1.#1#3}}
\newcommand{\UniqueQ}{\@ifstar\UniqueQStar\UniqueQNoStar}
\newcommand{\UniqueQStar}[3][\;]{\ensuremath{\left(\exists! #2#1.#1#3\right)}}
\newcommand{\UniqueQNoStar}[3][\;]{\ensuremath{\exists! #2#1.#1#3}}
  \newlength{\SFS@HEIGHT}
  \newlength{\SFS@WIDTH}
  \newcommand{\SplitX}[2]{
	    \settoheight{\SFS@HEIGHT}{$#2$}
	    \settowidth{\SFS@WIDTH}{$#2$}
	    \mbox{\begin{tikzpicture}[baseline=(current bounding box.center)]
	    \node[] (E) at (0,0) {$#1$};
	    \node[inner sep=0pt] (F) at ($(E.south west)+(1ex,-1ex)+(3ex+.5\SFS@WIDTH,-\SFS@HEIGHT)$) {$#2$};
	    \node[] (E) at (0,0) {\phantom{$#1$}};
	    \draw[fill] ($(E.east)+(1ex,0ex)$) circle (.2ex);
	    \draw[-] ($(E.east)+(1ex,0ex)$) -- ($(E.south east)+(1ex,-0.5ex)$) -- ($(E.south west)+(1ex,-0.5ex)$) -- ($(E.south west)+(1ex,-1ex)-(0,\SFS@HEIGHT)$) -- ($(E.south west)+(2.5ex,-1ex)-(0,\SFS@HEIGHT)$);
	    \draw[fill] ($(E.south west)+(2.5ex,-1ex)-(0,\SFS@HEIGHT)$) circle (.2ex);
	    \end{tikzpicture}}}
  \newcommand{\SplitS}[2]{
	    \settoheight{\SFS@HEIGHT}{$#2$}
	    \settowidth{\SFS@WIDTH}{$#2$}
	    \mbox{\begin{tikzpicture}[baseline=(current bounding box.center)]
	    \node[] (E) at (0,0) {$#1$};
	    \node[inner sep=0pt] (F) at ($(E.south west)+(1ex,0.5ex)+(3ex+.5\SFS@WIDTH,-\SFS@HEIGHT)$) {$#2$};
	    \end{tikzpicture}}}
\newcommand{\semantics}[1]{\langle\![#1]\!\rangle}
\newcommand{\Set}[2][]{\List[#1]{#2}{\left\{}{\right\}}}
\newcommand{\VSet}[2][]{\let\LISTOP\val\List[#1]{#2}{\{}{\}}}
\newcommand{\Tuple}[2][]{\List[#1]{#2}{(}{)}}
\newcommand{\VTuple}[2][]{\let\LISTOP\val\List[#1]{#2}{(}{)}}
\newcommand{\UNION}{\@ifstar\UNIONStar\UNIONNoStar}
\newcommand{\UNIONStar}[2]{\ensuremath{\left(\UNIONNoStar{#1}{#2}\right)}}
\newcommand{\UNIONNoStar}[2]{\ensuremath{\ifthenelse{\isempty{#1}}{\cdot}{#1}\cup\ifthenelse{\isempty{#2}}{\cdot}{#2}}}
\newcommand{\UNIOND}{\@ifstar\UNIONDStar\UNIONDNoStar}
\newcommand{\UNIONDStar}[2]{\ensuremath{\left(\UNIONDNoStar{#1}{#2}\right)}}
\newcommand{\UNIONDNoStar}[2]{\ensuremath{\ifthenelse{\isempty{#1}}{\cdot}{#1}\uplus\ifthenelse{\isempty{#2}}{\cdot}{#2}}}
\newcommand{\SETMINUS}{\@ifstar\SETMINUSStar\SETMINUSNoStar}
\newcommand{\SETMINUSStar}[2]{\ensuremath{\left(\SETMINUSNoStar{#1}{#2}\right)}}
\newcommand{\SETMINUSNoStar}[2]{\ensuremath{\ifthenelse{\isempty{#1}}{\cdot}{#1}\setminus\ifthenelse{\isempty{#2}}{\cdot}{#2}}}
\newcommand{\INTERSECT}{\@ifstar\INTERSECTStar\INTERSECTNoStar}
\newcommand{\INTERSECTStar}[2]{\ensuremath{\left(\INTERSECTNoStar{#1}{#2}\right)}}
\newcommand{\INTERSECTNoStar}[2]{\ensuremath{\ifthenelse{\isempty{#1}}{\cdot}{#1}\cap\ifthenelse{\isempty{#2}}{\cdot}{#2}}}
\newcommand{\CARTPROD}{\@ifstar\CARTPRODStar\CARTPRODNoStar}
\newcommand{\CARTPRODStar}[2]{\ensuremath{\left(\CARTPRODNoStar{#1}{#2}\right)}}
\newcommand{\CARTPRODNoStar}[2]{\ensuremath{\ifthenelse{\isempty{#1}}{\cdot}{#1}\times\ifthenelse{\isempty{#2}}{\cdot}{#2}}}
\newcommand{\FINCOUNT}{\@ifstar\FinCountStar\FinCountNoStar}
\newcommand{\FinCountStar}[1]{\ensuremath{\#(\ifthenelse{\isempty{#1}}{\cdot}{#1})}}
\newcommand{\FinCountNoStar}[1]{\ensuremath{\#\left(\ifthenelse{\isempty{#1}}{\cdot}{#1}\right)}}
\newcommand{\real}[1]{\ifstrempty{#1}{\mathbb{R}}{\mathbb{R}^{#1}}}
\newcommand{\Z}{\mathbb{Z}}
\newcommand{\N}{\mathbb{N}}
\newcommand{\fun}{\ensuremath{\ON{\rightarrow}}}
\newcommand{\setfun}{\ensuremath{\ON{\rightrightarrows}}}
\newcommand{\SetComp}[3][]{\{#1#2#1\mid#1#3#1\}}
\newcommand{\dom}[1]{\ensuremath{\mathrm{dom}(#1)}}
\newcommand{\hyint}[1]{\ensuremath{\llbracket #1 \,\rrbracket}}
\newcommand{\frr}[2]{\preccurlyeq_{#1}^{#2}}
\newcommand{\frrE}[2]{\cong_{#1}^{#2}}
\newcommand{\AP}{\ensuremath{\mathtt{AP}}}
\newcommand{\I}{\ensuremath{\Upsilon}}
\renewcommand{\O}{\ensuremath{\Lambda}}
\newcommand{\Ia}{\ensuremath{\widehat{\I}}}
\newcommand{\Oa}{\ensuremath{\widehat{\O}}}
\newcommand{\Ie}{\ensuremath{\I^\star}}
\newcommand{\Oe}{\ensuremath{\O^\star}}
\renewcommand{\i}{\ensuremath{\mu}}
\renewcommand{\o}{\ensuremath{\lambda}}
\newcommand{\Enab}{\ensuremath{\ON{Enab}}}
\newcommand{\Omap}[1]{\ensuremath{P_\O\ifthenelse{\isempty{#1}}{}{(#1)}}}
\newcommand{\Imap}[1]{\ensuremath{P_\I\ifthenelse{\isempty{#1}}{}{(#1)}}}
\newcommand{\Omapa}[1]{\ensuremath{\widehat{P}_{\O}\ifthenelse{\isempty{#1}}{}{(#1)}}}
\newcommand{\Imapa}[1]{\ensuremath{\widehat{P}_{\I}\ifthenelse{\isempty{#1}}{}{(#1)}}}
\newcommand{\Map}[1]{\ensuremath{P\ifthenelse{\isempty{#1}}{}{(#1)}}}
\newcommand{\Mapa}[1]{\ensuremath{\widehat{P}\ifthenelse{\isempty{#1}}{}{(#1)}}}
\newcommand{\Mapai}[1]{\ensuremath{\widehat{P}^{-1}\ifthenelse{\isempty{#1}}{}{(#1)}}}
\newcommand{\Omape}[1]{\ensuremath{P^\star_{\Oe}\ifthenelse{\isempty{#1}}{}{(#1)}}}
\newcommand{\Imape}[1]{\ensuremath{P^\star_{\Ie}\ifthenelse{\isempty{#1}}{}{(#1)}}}
\newcommand{\Mape}[1]{\ensuremath{P^\star\ifthenelse{\isempty{#1}}{}{(#1)}}}
\newcommand{\Mapei}[1]{\ensuremath{{P^\star}^{-1}\ifthenelse{\isempty{#1}}{}{(#1)}}}
\newcommand{\Mapc}[1]{\ensuremath{\check{P}\ifthenelse{\isempty{#1}}{}{(#1)}}}
\newcommand{\Omapc}[1]{\ensuremath{\check{P}_{\O}\ifthenelse{\isempty{#1}}{}{(#1)}}}
\newcommand{\Imapc}[1]{\ensuremath{\check{P}_{\I}\ifthenelse{\isempty{#1}}{}{(#1)}}}
\newcommand{\Sa}{\ensuremath{\widehat{S}}}
\newcommand{\Xa}{\ensuremath{\widehat{X}}}
\newcommand{\Xoa}{\ensuremath{\widehat{X}_0}}
\newcommand{\Xao}{\ensuremath{\widehat{X}_0}} % remove
\newcommand{\Ua}{\ensuremath{\widehat{U}}}
\newcommand{\Ya}{\ensuremath{\widehat{Y}}}
\newcommand{\Fa}{\ensuremath{\widehat{F}}}
\newcommand{\Ha}{\ensuremath{\widehat{H}}}
\newcommand{\pa}{\ensuremath{\widehat{\psi}}}
\newcommand{\Qa}{\ensuremath{\widehat{Q}}}
\newcommand{\qa}{\ensuremath{\widehat{q}}}
\newcommand{\p}{\ensuremath{\psi}}
\newcommand{\Q}{\ensuremath{\mathcal{Q}}}
\newcommand{\xa}{\ensuremath{\widehat{x}}}
\newcommand{\ya}{\ensuremath{\widehat{y}}}
\newcommand{\sigmaa}{\ensuremath{\widehat{\sigma}}}
\newcommand{\pia}{\ensuremath{\widehat{\pi}}}
\newcommand{\ua}{\ensuremath{\widehat{u}}}
\newcommand{\dap}{\ensuremath{\widehat{\delta}_{\times}}}
\newcommand{\Xo}{\ensuremath{X_0}}
\newcommand{\C}{\ensuremath{\mathcal{C}}}
\newcommand{\Ca}{\ensuremath{\widehat{\mathcal{C}}}}
\newcommand{\Paths}[1]{\ensuremath{\ON{Paths}(#1)}}
\newcommand{\CPaths}[1]{\ensuremath{\ON{CPaths}(#1)}}
\newcommand{\EPaths}[1]{\ensuremath{\ON{EPaths}(#1)}}
\newcommand{\Obs}[1]{\ensuremath{\ON{Obs}\ifthenelse{\isempty{#1}}{}{(#1)}}}
\newcommand{\Ext}[1]{\ensuremath{\ON{Ext}\ifthenelse{\isempty{#1}}{}{(#1)}}}
\newcommand{\Extn}[1]{\ensuremath{\ON{Ext}_{#1}}}
\renewcommand{\State}[1]{\ensuremath{\ON{State}\ifthenelse{\isempty{#1}}{}{(#1)}}}
\newcommand{\Prefs}[1]{\ensuremath{\ON{Prefs}(#1)}}
\newcommand{\CPrefs}[1]{\ensuremath{\ON{CPrefs}(#1)}}
\newcommand{\Last}[1]{\ensuremath{\ON{Last}\ifthenelse{\isempty{#1}}{}{(#1)}}}
\newcommand{\ELast}[1]{\ensuremath{\ON{ELast}\ifthenelse{\isempty{#1}}{}{(#1)}}}
\newcommand{\LastS}[1]{\ensuremath{\ON{LastX}\ifthenelse{\isempty{#1}}{}{(#1)}}}
\newcommand{\LastSn}[2]{\ensuremath{\ON{LastX}_{#1}\ifthenelse{\isempty{#2}}{}{(#2)}}}
\newcommand{\EHistn}[2]{\ensuremath{\ON{EHist}_{#1}\ifthenelse{\isempty{#2}}{}{(#2)}}}
\newcommand{\EHist}[1]{\ensuremath{\ON{EHist}\ifthenelse{\isempty{#1}}{}{(#1)}}}
\newcommand{\Hist}[1]{\ensuremath{\ON{Hist}\ifthenelse{\isempty{#1}}{}{(#1)}}}
\newcommand{\Histn}[2]{\ensuremath{\ON{Hist}_{#1}\ifthenelse{\isempty{#2}}{}{(#2)}}}
\newcommand{\EPrefs}[1]{\ensuremath{\ON{EPrefs}(#1)}}
\newcommand{\EPrefsk}[2]{\ensuremath{\ON{EPrefs}_{#2}(#1)}}
\newcommand{\WIN}{\ensuremath{\mathcal{W}}}
\newcommand{\Aut}{\ensuremath{\mathcal{A}}}
\newcommand{\Auts}{\ensuremath{\mathcal{A}_{\psi}}}
\newcommand{\Autap}{\ensuremath{\widehat{\mathcal{A}}_{\times}}}
\newcommand{\acc}{\ensuremath{\mathcal{F}}}
\newcommand{\acca}{\ensuremath{\widehat{\mathcal{F}}}}
\newcommand{\Lang}{\ensuremath{\mathcal{L}}}
  \newcommand{\fz}[1]{\ensuremath{f^{0}\ifthenelse{\isempty{#1}}{}{\Tuple{#1}}}} 
  \newcommand{\fo}[1]{\ensuremath{f^{1}\ifthenelse{\isempty{#1}}{}{\Tuple{#1}}}}
\title{%\LARGE \bf
Abstraction-Based Output-Feedback Control with State-Based Specifications} 
\author{
Anne-Kathrin Schmuck \and Mehrdad Zareian	
\thanks{
A.-K. Schmuck and M. Zareian are with the Max Planck Institute for Software Systems (MPI-SWS), Kaiserslautern, Germany.
Email: {\tt\small \{akschmuck,mzareian\}@mpi-sws.org}
}
}
\begin{document}

%---------- Title ----------
\maketitle
\thispagestyle{empty}
%\pagestyle{empty}

%---------- Abstract ----------
\begin{abstract}
We consider abstraction-based design of \emph{output-feedback} controllers for non-linear dynamical systems against specifications over \emph{state-based} predicates in linear-time temporal logic (LTL).
In this context, our contribution is two-fold: (I) we generalize feedback-refinement relations for abstraction-based output-feedback control to systems with arbitrary predicate and observation maps, and (II) we introduce a new algorithm for the synthesis of abstract \emph{output-feedback} controllers w.r.t.\ LTL specifications over unobservable \emph{state-based} predicates.

Our abstraction-based output-feedback controller synthesis algorithm consists of two steps. First, we compute a finite state abstraction of the original system using existing techniques. This process typically leads to an abstract system with non-deterministic predicate and observation maps which are not necessarily related to each other. Second, we introduce an algorithm to compute an output-feedback controller for such abstract systems. Our algorithm is inspired by reactive synthesis under partial observation and utilizes \emph{bounded synthesis}.
\end{abstract}

\section{Introduction}

Abstraction-based control design (ABCD) is a well known technique to synthesize
correct-by-design control software for cyber-physical systems. In particular, ABCD allows to consider continuous-state dynamical systems in combination with discrete, temporal control objectives and computes controllers almost fully automatically. 

ABCD comes in various flavors implemented in different tools and applicable to different types of dynamical systems and classes of temporal specifications, e.g. \cite{ReissigWeberRungger_2017_FRR,HsuMMS18,pFaces,nilsson2017augmented,li2018rocs}. However, almost all works on ABCD rely on the fact that the state of the system is observable. This can be a very restrictive assumption in practice. Recently, this requirement was relaxed and \emph{abstraction-based output-feedback control design} (ABoCD) was considered and follows mostly two different approaches.

In the first approach, classical observers for the original dynamical system are computed and incorporated into the abstraction process \cite{Mickelin2014synthesis,haesaert2015correct,Pola2019symbolic,ApazaGirard2020symbolic}. This handles the complexity of output-feedback control in the pre-abstraction phase and thereby requires particular properties of the underlying dynamics and observation maps to allow for observer design. 
 The second approach does not assume any \enquote{niceties} of the underlying dynamics or observation maps and moves the complexity of output-feedback control to the abstract layer by considering a partial-observation game for discrete output-feedback control \cite{MizoguchiUshio2018deadlockfree,MajumdarNS2020,khaled2020outputfeedback}.
 
 Within this paper, we follow the second approach. Here, existing works are either limited to control problems where specifications are defined over observables \cite{MajumdarNS2020}, only safety specifications over abstract states are considered \cite{MizoguchiUshio2018deadlockfree}, or it is required that the computed abstraction is detectable, i.e., after a finite number of steps the true abstract state can be determined \cite{khaled2020outputfeedback}.

\begin{figure}
	\centering
	\includegraphics[width = 0.7 \columnwidth]{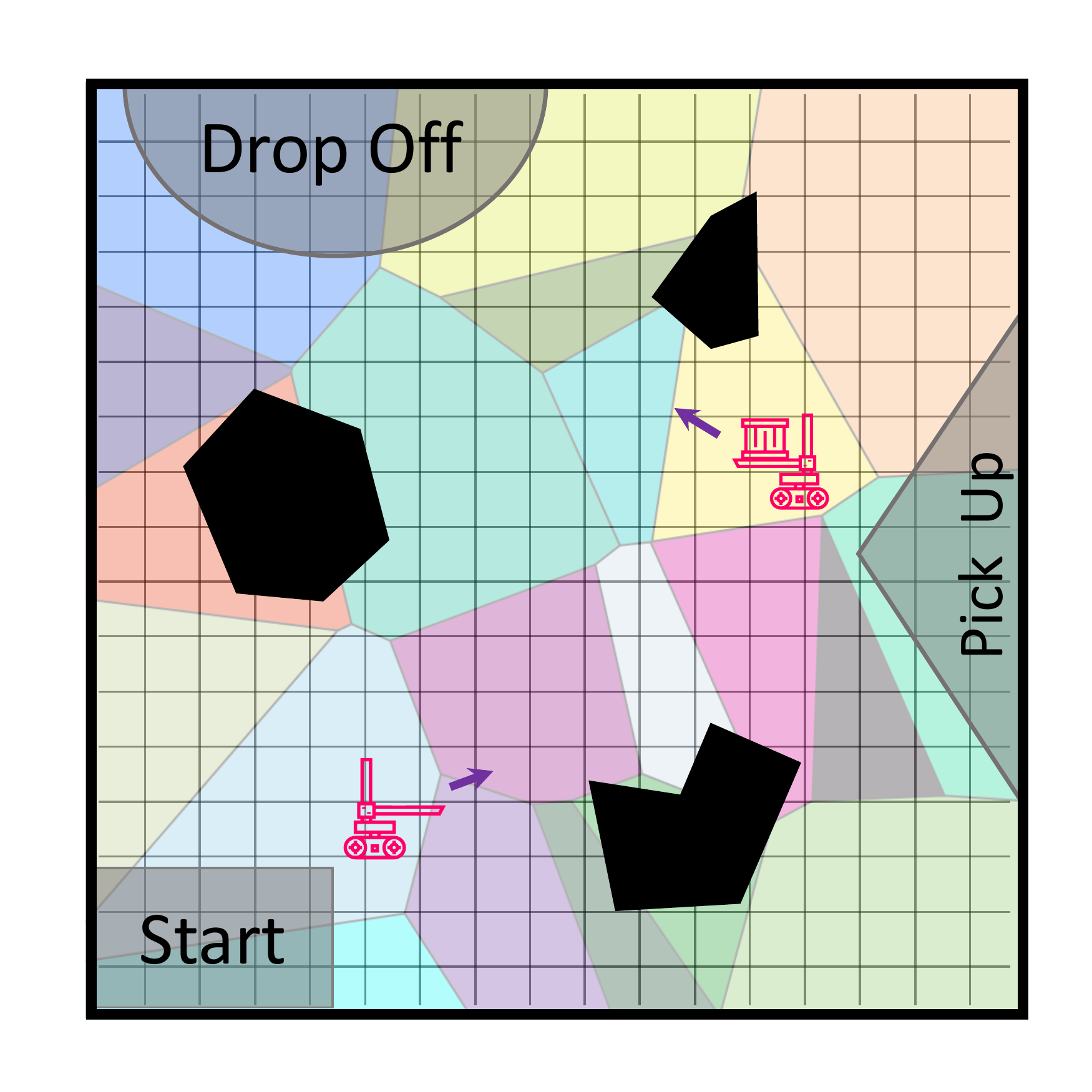}
	\vspace{-0.6cm}
	\caption{Robot motion planning example with restricted position measurement (only colored tiles are observable) and un-observable predicates (black and gray) induced by the specification which requires to alternate between \enquote{pick up} and \enquote{drop off} while avoiding obstacles (black). Abstraction results in the grided state space, where each cell is an abstract state with non-deterministic observation and predicate map (each cell might intersect with multiple tiles or predicates). }
	\vspace{-0.5cm}
	\label{fig:EX1}
\end{figure}

Within this paper, we relax these assumptions on ABoCD as exemplified in the robotic motion planning example depicted in \REFfig{fig:EX1}. Here, a mobile robot (with non-linear disturbed dynamics) can only sense its location by detecting the color of the tile it is currently moving over. This can for example be realized by a downward pointing sensor that detects different colored lines on the floor which indicate tile boundaries. Using this restricted (but very cheap) position measurement, the robot should be controlled such that it alternates between the \enquote{pick up} and \enquote{drop off} location, while avoiding the black obstacles. We see that specification predicates (i.e., the black and gray regions) are not expressible in terms of the (restricted) observations and hence are not observable. 

In order to synthesize an output-feedback controller for such a system, a straightforward approach is to first employ a uniform grid-based abstraction technique, as e.g.\ in \texttt{SCOTS} \cite{Scots}, to generate a finite abstraction. For the example in \REFfig{fig:EX1}, the abstract system would have one abstract state per grid cell. When computing this abstraction, we additionally need to transform the predicate- and observation maps. As we see from \REFfig{fig:EX1}, typically multiple predicates or tiles intersect with a single boxed grid cell. This leads to non-deterministic predicate and observation maps on the abstraction.

This example demonstrates that ABoCD with non-observable predicates requires to
 \begin{compactenum}[(I)]
 \item extend the notion of \emph{feedback-refinement relations} (FRR) \cite{ReissigWeberRungger_2017_FRR} to systems with non-deterministic predicate and observation maps, and 
 \item to develop an algorithm that synthesizes output-feedback controllers for such systems.
\end{compactenum}

Within this paper, we tackle challenge (I) in \REFsec{sec:soundabs} where we define  \emph{extended feedback-refinement relations} (eFRR) and sound abstract specifications. As our first contribution, this provides a new framework for sound ABoCD in the presence of unobservable predicates.

Afterwards, we address challenge (II) in two steps. We first show in \REFsec{sec:Abst:A} that employing a standard grid-based abstraction technique, as in \texttt{SCOTS} \cite{Scots} with the obvious transformation of predicate- and observation maps, yields a sound finite abstraction which allows for an eFRR to the original system.  
As our second contribution we then show in \REFsec{sec:Abst:B} how an abstract output-feedback controller can be synthesized for this finite abstraction which has possibly non-deterministic predicate and observation maps. We emphasize that this algorithm does not require any pre-processing of the predicate map. I.e., we do not need to smartly \enquote{expand} or \enquote{shrink} predicates to render predicate maps deterministic, as required for state-based ABCD in \texttt{SCOTS}.

\begin{figure}[t]
	\centering
	\includegraphics[width = 1\columnwidth]{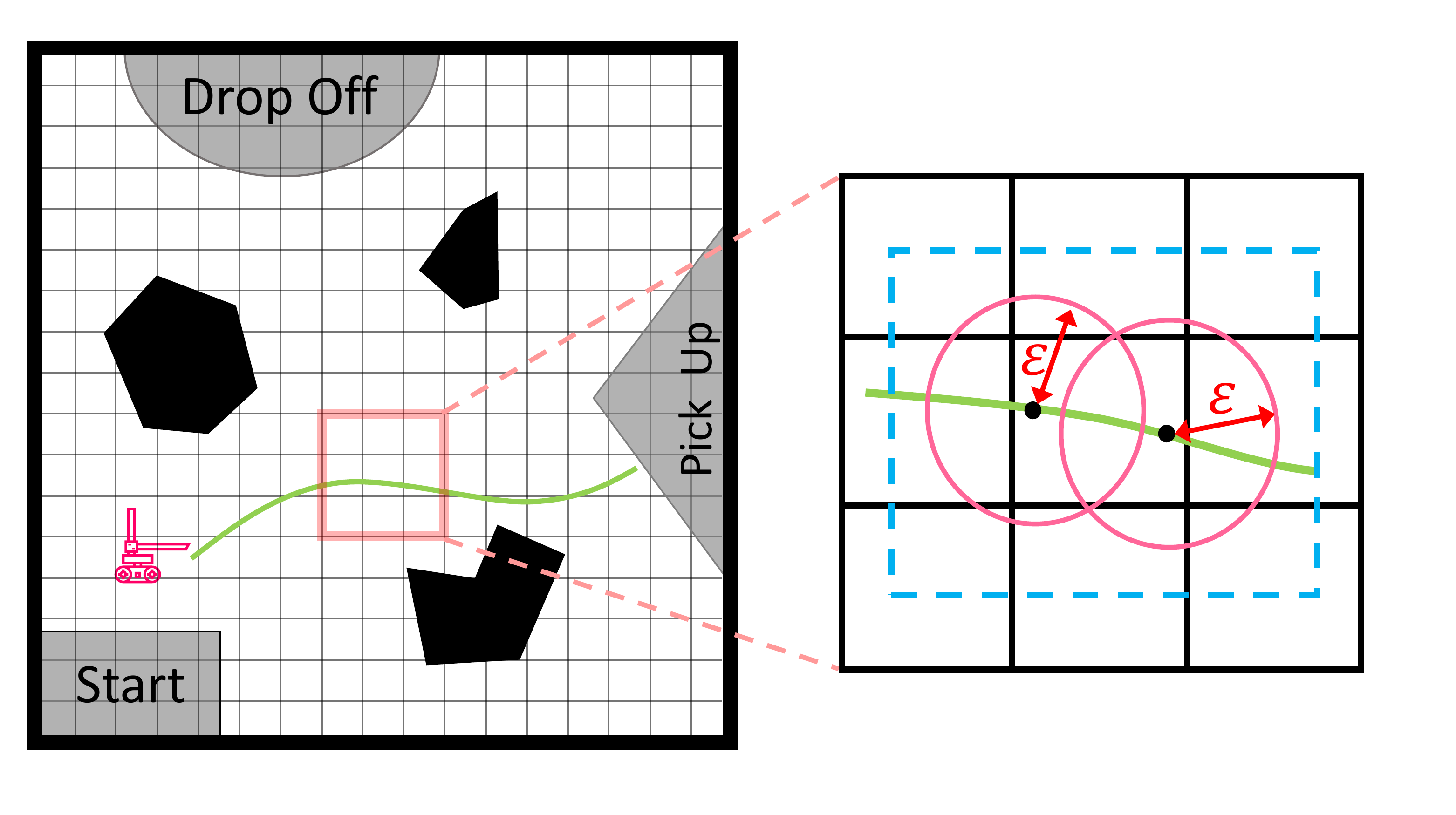}
	\vspace{-0.9cm}
	\caption{Another application scenario for our ABoCD technique. Here the position of the robot is measured with some error bounded by $\varepsilon$. After grid-based abstraction this results in a non-deterministic output-map (indicated by the blue dashed line on the right) even if the abstract state and output spaces coincide.}
	\vspace{-0.5cm}
	\label{fig:EX2}
\end{figure}

Our resulting ability to synthesize output-feedback controllers for finite systems with non-deterministic output maps, also allows us to tackle a slightly different abstraction-based control problem exemplified in \REFfig{fig:EX2}, which is more closely related to the problems studied in the first approach to ABoCD (e.g., in \cite{ApazaGirard2020symbolic}). Here, the position of the robot can be measured with some error $\varepsilon$. When abstracting this system using a gird-based approach, the resulting abstract observation map also becomes non-deterministic, as shown in the right side of \REFfig{fig:EX2} even if we choose the abstract output space identical to the abstract state space.

%!TEX root = main.tex

\section{Preliminaries}\label{sec:prelim}

% \subsection{Preliminary Definitions}

 \smallskip 
\noindent\textbf{Notation.}
We use the symbols $\N$, $\Z$,  $\real{}$, and $\real{}_{>0}$ 
to denote the sets of natural numbers, integers, reals, and positive reals, respectively. 
Given $a,b\in\real{}$ s.t.\ $a\leq b$, we denote by $[a,b]$ a closed interval and define $[a;b]=[a,b]\cap \Z$ as its integer counterpart. 
For a set $W$, we write $W^*$ and $W^\omega$ for the sets of finite and infinite sequences over $W$, respectively, and define $W^\infty:=W^*\cup W^\omega$. 
 
For $w\in W^*$, we write $|w|$ for the length of $w$ and $\varepsilon$ for the empty string with $|\varepsilon|=0$; the length of $w\in W^\omega$ is $\infty$. 
We define $\dom{w} = \Set{0,\ldots, |w|-1}$ if $w\in W^*$, and $\dom{w} = \N$ if $w\in W^\omega$. 
For $k\in \dom{w}$ we write $w(k)$ for the $k$-th symbol of $w$
and $w|_{[0;k]}$ for the restriction of $w$ to the domain $[0;k]$. 
Given two sets $A$ and $B$, $f:A\setfun B$ and $f:A\fun B$ denote a set-valued and ordinary map, respectively. %$f$ is called \emph{strict} if $f(a)\neq\emptyset$ for all $a\in A$.  
The inverse mapping $f^{-1}:B\setfun A$ is defined via its respective binary relation: $f^{-1}(b)=\SetComp{a\in A}{b\in f(a)}$. 
By slightly abusing notation, we lift maps to subsets of their domain in the usual way, i.e., for a set-valued map $f:A\setfun B$ and $\alpha\subseteq A$ we have $f(\alpha)=\SetComp{b}{\ExQ{a\in\alpha}{b\in f(a)}}$, and similarly for ordinary maps. For any set $A$ we denote the identity function over $A$ by $\iota$ i.e., $\iota(a)=a$ for all $a\in A$.

 \smallskip 
\noindent\textbf{Systems.}
A \emph{system} $S=(X,\Xo,U,F,Y,H)$ consists of a state space $X$, 
a set of initial states $\Xo\subseteq X$, 
an input space $U$, 
a %strict set-valued 
transition function $F:X\times U\setfun X$,
an output space $Y$, and 
an output function $H:X\setfun Y$.
The only restriction we impose on such systems is $H(x)\neq\emptyset$ for all $x\in X$, i.e., we require that $Y$ is a cover of $X$. The system $S$ is called \emph{finite} if $X$, $U$ and $Y$ are finite sets. 

We lift the functions $F$ and $H$ to sets of states $A\subseteq X$ and $B\subseteq U$ s.t.\
% \begin{align}
 $F(A,B):=\bigcup_{a\in A}\bigcup_{b\in B} F(a,b)$, and $H(A)=\bigcup_{a\in A} H(a)$. If not explicitly defined otherwise, we apply this \enquote{lifting} of maps from single elements to sets via their \emph{union} to all maps defined in this paper. 
 
 Given a state $x\in X$ we define the set of \emph{enabled inputs} in $x$ as
$\Enab_{S}(x):=\SetComp{u\in U}{F(x,u)\neq\emptyset}$. We lift this map to sets of states in a slightly unusual fashion by taking \emph{intersection} rather then \emph{union}. That is, given a set $A\subseteq X$ we define
% \begin{align}
 $\Enab_{S}(A):=\bigcap_{a\in A} \Enab_{S}(a)$.
% To simplify notation, we assume that $H$ respects $\Xo$, that is, if $H^{-1}(y)\cap \Xo\neq\emptyset$ we have $H^{-1}(y)\subseteq \Xo$. 

% \end{align}

% If $Y$ and $H$ are not specified we call the tuple $(X,\Xo,U,F)=:M$ a finite state machine.

% and state propositions $\AP_S$. We interpre specifications with input and output propositions only over finite input/output systems, and assume that $U:=2^{\AP_I}$ and $Y:=2^{\AP_I}$. 
% For state proposition we define the set $P:=2^{\AP_S}$. Given a system $S$ we interpret state propositions over $S$ by introducing a map $\Omap{}:X\setfun P$. This allows us to interpret state-propositions w.r.t.\ both \emph{finite} and \emph{infinite state} systems. We remark that, in principle, this is also possibe for input and output propositions by adding appropriate maps. We refrain from doing so in this submission for the sake of notational simplicity. 
% 
% In principle our framework allows to consider all combinations of propositions in $\psi$. However, for the sake of notational simplicity, we only consider two instances, input/output specifications and input/state specifications.

% Let $D\in 2^{\set{I,O,S}}$. Then we write $\psi_D$ to explicate that $\psi$ is defined over $\AP$ divided into $|D|$ subsets $\AP_i$ with $i\in D$, i.e., $\AP=\dot{\bigcup}_{i\in D}\AP_i$. If $S\in D$ we call $\psi_D$ a \emph{state-based specification}.

\smallskip
\noindent\textbf{Trace Semantics.}
A (maximal) \emph{path} of $S$ is a sequence $\pi=x_0u_0x_1u_1\hdots$ such that $x_0\in X_0$, for all $k\in\dom{\pi}$ we have $x_{k}\in F(x_{k-1},u_{k-1})$, and if $\dom{\pi}=k<\infty$ we have $F(x_{k},u)=\emptyset$ for all $u\in U$. 
The set of all paths over $S$ is denoted by $\ON{Paths}(S)$. 
The \emph{prefix up to $x_n$} of a path $\pi$ over $S$ is denoted by $\pi|_{[0;n]}=x_0u_0x_1u_1\hdots x_n$ with length $|\pi|_{[0;n]}|=n+1$ and last element $\ON{Last}(\pi_{[0;n]})=x_n$. 
The set of all such prefixes is denoted by $\ON{Prefs}(S)$. 

Given a path $\pi$ an \emph{external} sequence $\sigma=y_0u_0y_1u_1\hdots$ is generated by $\pi$ if $y_k\in H(x_k)$ for all $k\in\dom{k}$, denoted by $\sigma\in\Ext{\pi}$. 
The set of all external sequences of a system $S$ is defined by  $\ON{EPaths}(S):=\Ext{\Paths{S}}$ with its prefix set $\ON{EPrefs}(S):=\Ext{\Prefs{S}}$.

% All external sequences generated by $\pi$ are collected in the set $\ON{Ext}(\pi)$. 
% The set of all external sequences over $S$ is denoted by $\EPaths{S}$ and we define $\EPrefs{S}:=\Ext{\Prefs{S}}$.
% 
% The set $\EPaths{S}$ is called \emph{topologically closed} (or \emph{closed} for short) if for any infinite sequence $\sigma=y_0u_0y_1u_1\hdots\in Y(UY)^\omega$,  
% whenever $\sigma_{[0;k]}\in\EPrefs{S}$ for all $k\in\N$ it holds that $\sigma\in\EPaths{S}$. We say that $S$ has \emph{closed external behavior} if $\EPaths{S}$ is closed. We further note that the set $\XPaths{S}$ is always closed for systems $S$ as defined in this paper (see, e.g., \cite{Willems} for details).

We lift the map $\Last{}$ to external sequences and write 
$x\in\LastSn{S}{\sigma}$ if there exists $\pi\in\Prefs{S}$ s.t.\ $\sigma\in\Ext{\pi}$ and $x=\Last{\pi}$.
For a state $x\in X$ we define all prefixes of $S$ that reach $x$ as $\Histn{S}{x}=\SetComp{\pi\in\Prefs{S}}{\Last{\pi}=x}$ and all external sequences generated by such prefixes as $\EHistn{S}{x}=\SetComp{\sigma\in\EPrefs{S}}{x\in\LastSn{S}{\sigma}}$. 

\smallskip 
\noindent\textbf{Control Strategies.}
We define \emph{output-feedback control strategies} as functions $\C:\EPrefs{S}\fun U$. We say that $\C$ is \emph{feedback-composable} with $S$ iff we can iteratively construct their external closed-loop behavior as follows. First, we define $\EPrefsk{S,\C}{0}:=H(\Xo)$. Then, for all $k\in\N$ we require that $\nu\in\EPrefsk{S,\C}{k}$ implies that $\C(\nu)$ is defined and $\C(\nu)\in\Enab(\LastSn{S}{\nu})$. Further, we define $\EPrefsk{S,\C}{k+1}:=\SetComp{\nu u y\in\EPrefs{S}}{\nu\in\EPrefsk{S,\C}{k},u=\C(\nu)}$. We have $\EPrefs{S,\C}:=\bigcup_{k\in\N}\EPrefsk{S,\C}{k}$ and define the set of infinite external closed-loop sequences of $S$ under $\C$ as the set $\EPaths{S,\C}\subseteq Y(UY)^\omega$ s.t.\ $\sigma\in\EPaths{S,\C}$ iff $\sigma|_{[0;k]}\in\EPrefs{S,\C}$ for all $k\in\N$.
We further define $\CPrefs{S,\C}:=\Extn{S}^{-1}(\EPrefs{S,\C})$ and $\CPaths{S,\C}:=\Extn{S}^{-1}(\EPaths{S,\C})$.

\smallskip 
\noindent\textbf{Specifications.}
We consider $\omega$-regular specifications over 
a finite set of atomic (boolean) propositions $\AP$ which are given by a formula $\psi$ in \emph{linear temporal logic} (LTL). %input and output propositions $\AP_I$ and $\AP_O$. 
We omit the standard definitions of $\omega$-regular languages and LTL (see, e.g., \cite{Thomas90,Thomas95}). % and \cite{citation_on_LTL}).
We assume that the set of atomic propositions $\AP$ can be divided into
input propositions $\AP_I$ and output propositions $\AP_O$ defining the finite sets 
$\I:=2^{\AP_I}$ and $\O:=2^{\AP_O}$ of predicates, which collect all possible sets of currently true propositions.

We interpret a specification $\psi$ on a system $S$ with the help of two predicate maps $\Imap{}:U\setfun \I$ and $\Omap{}:X\setfun \O$. Given a predicate sequence $\nu=\lambda_0\mu_0\lambda_1\mu_1\hdots$ and a path $\pi=x_0u_0x_1u_1\hdots$ of $S$, we say that $\nu$ is generated by $\pi$, written $\nu\in\Map{\pi}$, iff for all $k\in\dom{\pi}$ holds that $\mu_k\in\Imap{u_k}$ and $\lambda_k\in\Omap{x_k}$.

% We adopt the usual convention to interpret a specification $\psi$ as a $\omega$-regular language $\semantics{\psi}\subseteq \O(\I\O)^\omega$ of desired \emph{infinite} input/output sequences. I.e., we implicitly require that the closed-loop system must never block. %Using this convention implies that we are only looking for controllers which render the closed-loop system \emph{non-blocking}. 

\smallskip 
\noindent\textbf{Control Problem.}
% We consider two szenarios: input/output-based and state-based specifications. For input/output-based specifications we assume a \emph{finite input/output system} $S$ and a partition of $\AP$ into input and output propositions $\AP_I$ and $\AP_O$ s.t.\ $\AP=\AP_I\dot{\cup}\AP_O$, $U=\twoup{\AP_I}$ and $Y=\twoup{\AP_O}$. For state-based specifications, we assume a \emph{finite state system} $S$ s.t.\ $X=\twoup{\AP}$.
% Then an $\omega$-regular input/output-based (resp. state-based ) specification $\psi$ (resp. $\psi^\dagger$) can be written as a language $\semantics{\psi}\subseteq Y(UY)^\omega$ (resp. $\semantics{\psi^\dagger}\subseteq X^\omega$) of desired sequences.
% 
Given a system $S$, a specification $\psi$ and an interpretation of $\psi$ on $S$ via $\Imap{}$ and $\Omap{}$ defining the map $\Map{}$, the \emph{output-feedback control problem}, written $\tuplel{S, \psi,\Map{}}$, asks to find an \emph{output-feedback} control strategy
$\C$ which is feedback-composable with $S$ and all closed-loop paths of $S$ under $\C$ fulfill the specification.

To formalize this further, we adopt the usual convention to interpret a specification $\psi$ as a $\omega$-regular language $\semantics{\psi}\subseteq \O(\I\O)^\omega$ of desired \emph{infinite} predicate sequences.
With this, we can define set $\WIN(S,\psi,\Map{})$ of \emph{sound} output-feedback control strategies s.t.\ $\C\in\WIN(S,\psi,\Map{})$ 
iff $\C$ is feedback-composable with $S$ and $\Map{\CPaths{S, \C}}\subseteq\semantics{\psi}$. %We note that the later subset-formalization implicitly requires that the closed-loop system never blocks. %That is, for every path $\pi\in\CPaths{S, \C}$, we need that $\C(\Ext{\pi_{[0;k]}})\neq\emptyset$ for all $k\in\N$.

% $\Map{\CPaths{S, \C}}\subseteq\semantics{\psi}$.
% We define 
% $\WIN(S,\psi,\Map{}):= \set{\C\mid \Map{\CPaths{S, \C}}\subseteq\semantics{\psi}}$ as the set of all such output-feedback control strategies.
% For a \emph{state-feedback} controller $\C^\dagger$, we define analogously the set 
% $\WIN^\dagger(S,\psi,P)$.

% 
% (a) state-feedback control with state-based specification (S-S), (b) state-feedback control with input/output-based specification (S-I/O), (c) output-feedback control with state-based specification (O-S), and (d) output-feedback control with input/output-based specification (O-I/O). The respective control problems ask to find a controller $\C$ s.t.\ (a)  $\CPaths{S, \C^\dagger}\subseteq\semantics{\psi^\dagger}$, (b) $\Ext{S, \C^\dagger}\subseteq\semantics{\psi}$, (c) $\CPaths{S, \C}\subseteq\semantics{\psi^\dagger}$, and (d) $\Ext{S, \C}\subseteq\semantics{\psi}$. We denote the set of all such sound control strategies as (a) $\WIN^\dagger(S,\psi^\dagger)$, (b) $\WIN^\dagger(S,\psi)$, (c) $\WIN(S,\psi^\dagger)$, and (d) $\WIN(S,\psi)$.
% %

\begin{figure}[t]
	\centering
	\includegraphics[width =  0.95 \columnwidth]{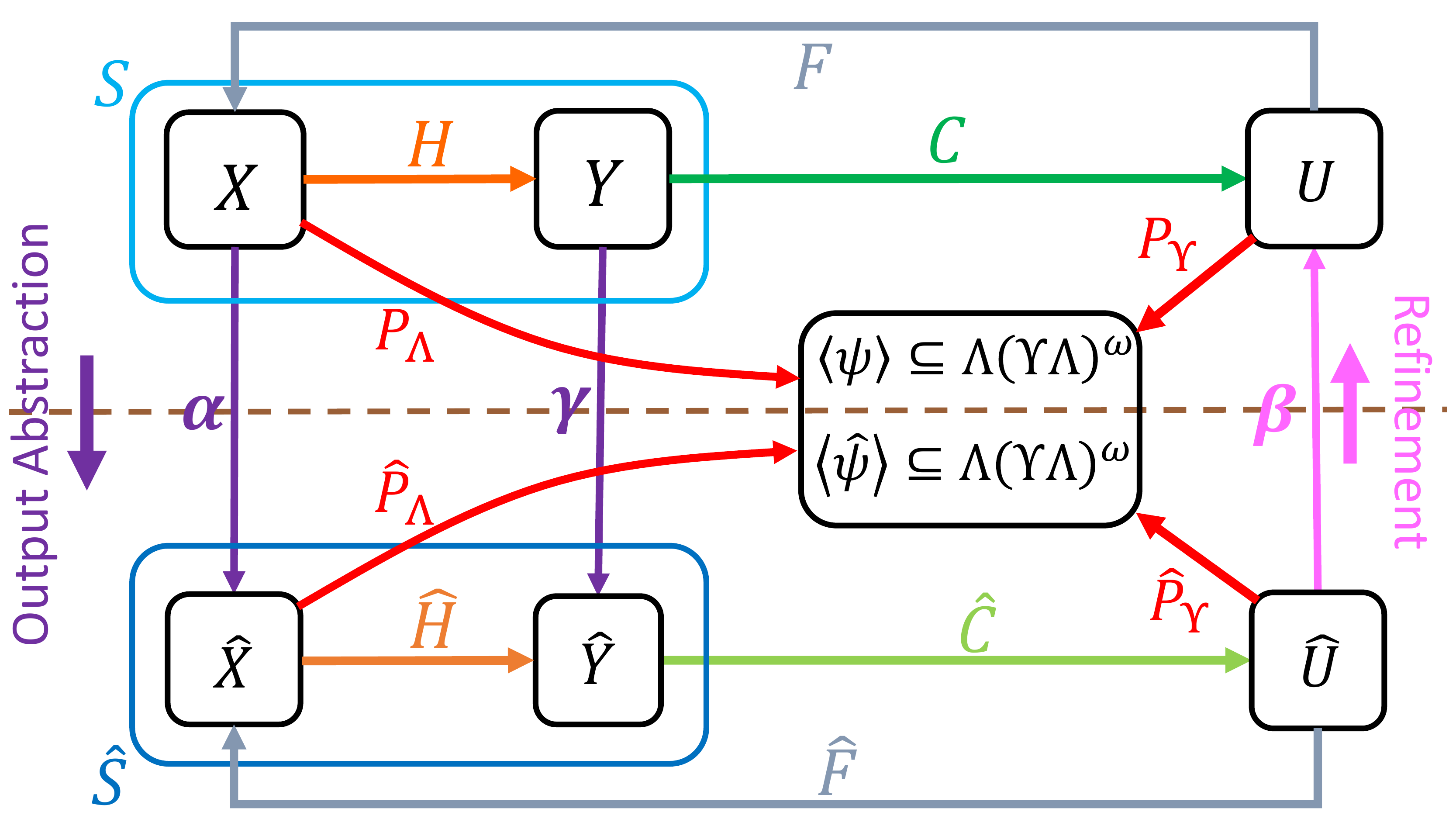}
\vspace{-0.3cm}
	\caption{Schematic representation of our ABoCD framework where $\Q=\tuplel{\alpha,\beta,\gamma}$ needs to be an eFRR (Def.~\ref{def:SoundAbs}) and $(\pa,\widehat{P})$ needs to be a sound abstract specification (Def.~\ref{def:soundaspec}). }
	\vspace{-0.3cm}
	\label{fig:sec2_outline}
\end{figure}

\section{Sound Abstraction-Based Output-Feedback Control}\label{sec:soundabs}
Within this section we extend the notion of feedback-refinement relations (FRR) and sound abstract specifications from \cite{ReissigWeberRungger_2017_FRR} to systems with possibly infinite state, input and output spaces and non-deterministic observation and predicate maps.
The intuition behind this extension is depicted in \REFfig{fig:sec2_outline}. Instead of a single relation between $X$ and $\Xa$ as in FRR, our \emph{extended feedback refinement relation (eFRR)} is a tuple $\Q=\tuplel{\alpha,\beta,\gamma}$ of relations between the tuples $(X,U,Y)$ and $(\Xa,\Ua,\Ya)$ (see the purple and pink arrows in \REFfig{fig:sec2_outline}). In addition, the notion of \emph{sound abstract specifications} ensures that the specification is correctly abstracted and interpreted over $\Sa$ via $\Mapa{}$ (indicated in red in \REFfig{fig:sec2_outline}) for systems related via $\Q$. 
As the main result of this section, we show that this extended notion of FRR together with sound abstract specifications allows for sound abstraction-based output-feedback control design (ABoCD) in the presents of non-deterministic observation and predicate maps.

\smallskip
\noindent\textbf{Sound Abstractions.}
In direct analogy to the definition of feedback-refinement relations (FRR) in \cite[Def. V.2]{ReissigWeberRungger_2017_FRR} we define a \emph{sound abstraction} under an \emph{extended feedback refinement relation (eFRR)} as follows.  

\begin{definition}\label{def:SoundAbs}
Let $S = (X, \Xo, U, F, Y, H)$ and $\Sa = (\Xa, \Xoa, \Ua, \Fa, \Ya, \Ha)$ be systems. % with observation maps $\Omap{}:Y\fun P$ and $\Omapa{}:\Ya\fun P$, respectively.
Further, let $\Q=\tuplel{\alpha,\beta,\gamma}$ be a set of set-valued functions s.t.\
$\alpha:X\setfun \Xa$, $\beta:\Ua\setfun U$ and $\gamma:Y\setfun \Ya$. 
Then we call $\Sa$ a \emph{sound abstraction}  of $S$ under $\Q$, written 
$S\frr{\Q}{} \Sa$, if
\begin{compactenum}
\item[$\ON{(A1)}$] $\AllQ{x\in \Xo}{\emptyset\neq\alpha(x) \subseteq \Xoa}$, 
\end{compactenum}
and for all $x\in X$ and $\xa\in\alpha(x)$ holds that 
\begin{compactenum}
\item[$\ON{(A2)}$] for all $\ua\in \Enab_{\Sa}(\xa)$ holds 
\begin{compactenum}[(i)]
\item $\emptyset\neq\beta(\ua)\subseteq \Enab_{S}(x)$ and
\item $\emptyset\neq\alpha(F(x,\beta(\ua)))\subseteq \Fa(\xa,\ua)$, and
\end{compactenum}
\item[$\ON{(A3)}$] $\emptyset\neq\gamma(H(x))\subseteq \Ha(\xa)$.
\end{compactenum}
$\Sa$ is  a \emph{sound realization} of $S$, written $S\frrE{\Q}{} \Sa$, if $S\frr{\Q}{} \Sa$ and $\Sa\frr{\Q^{-1}}{} S$ where $\Q^{-1}:=\tuplel{\alpha^{-1},\beta^{-1},\gamma^{-1}}$.
\end{definition}

In analogy to \cite[Def. V.2]{ReissigWeberRungger_2017_FRR} we call $\Q$ an \emph{extended feedback refinement relation (eFRR)} from $S$ to $\Sa$. We write $S\frr{}{} \Sa$ if there \emph{exists} an eFRR $\Q$ s.t.\ $S\frr{\Q}{} \Sa$.

Utilizing \REFfig{fig:sec2_outline} we can interpret \REFdef{def:SoundAbs} as follows. For $\Q=\tuplel{\alpha,\beta,\gamma}$ to be an eFRR we require that
for every state $x$ of $S$ and every abstraction $\xa\in\alpha(x)$ holds that all refinements $u\in\beta(\ua)$ of the inputs $\ua$ enabled in $\xa$ are enabled in $x$ (
$\ON{(A2.i)}$). This prevents deadlocks in $S$, i.e., any input choice made by the abstraction and any non-deterministic refinement choice of this input allows progress in $S$. Further, under a transition of such related enabled inputs in related states the original system $S$ can only reach states $x'$ where all related abstract states $\xa'$ are also reachable in $\Sa$ from $\xa$ under $\ua$ ($\ON{(A2.ii)}$). This ensures that transitions of the abstract system always overapproximate any related behavior of the original system. 
Similarly, $\ON{(A3)}$ ensures that all possible observations $y$ of a state $x$ are only related to abstract observations $\ya$ that are observable in the related state $\xa$. This ensures that the observed behavior of $\Sa$ overapproximates the true observations for output-feedback control.
Finally, condition $\ON{(A1)}$ together with $\ON{(A2.ii)}$ ensures that all reachable states of $S$ are related to at least one abstract state. 

\smallskip
\noindent\textbf{Sound Abstract Specifications.}
In order to ensure that the specification is correctly interpreted over $\Sa$ via $\Mapa{}$ we next introduce the concept of sound abstract specifications in analogy to \cite[Def. VI.2]{ReissigWeberRungger_2017_FRR}.

\begin{definition}\label{def:soundaspec}
Let $S$ and $\Sa$ be systems s.t.\ $S\frr{\Q}{} \Sa$. 
Further, let $(\psi,\Map{})$ and $(\pa,\Mapa{})$ be specifications interpreted over $S$ and $\Sa$ respectively. Then we say that $(\pa,\Mapa{})$ is a \emph{sound abstract specification} associated with $S$, $\Sa$, $\Q$ and $(\p,\Map{})$, written $\tuplel{S,(\psi,\Map{})}\frr{\Q}{} \tuplel{\Sa,(\pa,\Mapa{})}$ if the following holds. 
For \emph{all} $\pi=x_0u_0x_1\hdots\in\CPaths{S}$ for which there \emph{exists} an input sequence $\ua_0\ua_1\hdots$ with $u_i\in\beta(\ua_i)$ (for all $i\in\N$) s.t.\ for \emph{all} $\pia=\xa_0\ua_0\xa_1\hdots$ with $\xa_i\in\alpha(x_i)$ (for all $i\in\N$) holds $\Mapa{\pia}\subseteq\semantics{\pa}$ also holds that $\Map{\pi}\subseteq\semantics{\psi}$.
\end{definition}

Intuitively, \REFdef{def:soundaspec} ensures that for every \enquote{good} path $\widehat{\pi}$ over $\Sa$, i.e., a path that only generates predicate sequences in $\semantics{\pa}$, all paths of $S$ related to $\widehat{\pi}$ via the eFRR $Q$ are also \enquote{good}, i.e., only generate predicate sequences in $\semantics{\psi}$. If this holds, we can use $(\pa,\Mapa{})$ to synthesize a \enquote{good} abstract controller which can then be refined into a \enquote{good}  controller $\C$ for $\Sa$. This is formalized next.

\smallskip
\noindent\textbf{Sound Controller Refinement.}
As the main result of this section, we now show how output-feedback controllers for sound abstractions under sound abstract specifications can be refined to output-feedback controllers for the original system w.r.t.\ the original specification.

\begin{theorem}[Sound ABoCD]\label{thm:ABoCD}
 Let $\tuplel{S,(\psi,\Map{})}\frr{\Q}{} \tuplel{\Sa,(\pa,\Mapa{})}$ with $\Q=\tuplel{\alpha,\beta,\gamma}$.
 Further, let $\Ca\in\WIN(\Sa,\pa,\Mapa{})$ and define $\C$ s.t.\ 
 \begin{align}\label{equ:ABoCD}
  \AllQ{\sigma\in\EPrefs{S}}{\C(\sigma)\in\beta(\Ca(\Omega_{\beta,\gamma}(\sigma)))},
 \end{align}
 where $\ya_0\ua_0\ya_1\hdots\in \Omega_{\beta,\gamma}(y_0u_0y_1\hdots)$ iff $\ya_k\in\gamma(y_k)$ and $u_k\in\beta(\ua_k)$ for all $k\in\N$. Then $\C\in\WIN(S,\psi,\Map{})$.
\end{theorem}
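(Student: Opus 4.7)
\smallskip\noindent\textbf{Proof Plan.}
The plan is to establish two facts: (i) $\C$ is feedback-composable with $S$, and (ii) $\Map{\CPaths{S,\C}}\subseteq\semantics{\psi}$. For (i), I would proceed by induction on prefix length, maintaining the invariant that for every $\sigma\in\EPrefs{S,\C}$ there exists an abstract witness $\widehat{\sigma}\in\Omega_{\beta,\gamma}(\sigma)\cap\EPrefs{\Sa,\Ca}$. Non-emptiness of $\beta(\Ca(\Omega_{\beta,\gamma}(\sigma)))$ then follows from the feedback-composability of $\Ca$ with $\Sa$, and inclusion in $\Enab_S(\LastSn{S}{\sigma})$ is an immediate consequence of $\ON{(A2.i)}$. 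Extending the witness by one step uses $\ON{(A3)}$ to lift a new concrete observation into a compatible abstract one via $\gamma$.

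For (ii), I would fix $\pi=x_0u_0x_1\hdots\in\CPaths{S,\C}$ with external sequence $\sigma=y_0u_0y_1\hdots\in\Ext{\pi}\cap\EPaths{S,\C}$, and build a single infinite abstract external sequence $\widehat{\sigma}=\ya_0\ua_0\ya_1\hdots$ that \emph{simultaneously} satisfies $\widehat{\sigma}\in\Omega_{\beta,\gamma}(\sigma)$, $\widehat{\sigma}\in\EPaths{\Sa,\Ca}$ (so $\ua_k=\Ca(\widehat{\sigma}|_{[0;k]})$), and $u_k\in\beta(\ua_k)$ for all $k$. The construction is inductive via equation~(\ref{equ:ABoCD}): at each step $k$ some $\ya_k\in\gamma(y_k)$ must exist for which $u_k=\C(\sigma|_{[0;k]})\in\beta(\Ca(\widehat{\sigma}|_{[0;k]}))$; in the infinite-state setting a K\"onig-style compactness argument extracts a globally consistent witness from these local constraints. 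I expect this simultaneous alignment of the concrete and abstract external histories with $\C$'s refinement choices to be the main technical obstacle, since equation~(\ref{equ:ABoCD}) alone does not a priori force the per-prefix witnesses to extend one another.

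Once $\widehat{\sigma}$ is in hand, the rest is direct. For any state sequence with $\xa_k\in\alpha(x_k)$, $\ON{(A1)}$ and $\ON{(A2.ii)}$ imply that $\pia=\xa_0\ua_0\xa_1\hdots$ is a path of $\Sa$, while $\ON{(A3)}$ gives $\ya_k\in\gamma(y_k)\subseteq\Ha(\xa_k)$ and hence $\widehat{\sigma}\in\Ext{\pia}$. Combined with $\widehat{\sigma}\in\EPaths{\Sa,\Ca}$ this yields $\pia\in\CPaths{\Sa,\Ca}$, and the soundness of $\Ca$ gives $\Mapa{\pia}\subseteq\semantics{\pa}$. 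Since this holds for every such $\xa$-sequence, \REFdef{def:soundaspec} applied to $\pi$ with the input sequence $\ua_0\ua_1\hdots$ extracted from $\widehat{\sigma}$ concludes $\Map{\pi}\subseteq\semantics{\psi}$, establishing $\C\in\WIN(S,\psi,\Map{})$.
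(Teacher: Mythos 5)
Your overall skeleton (feedback-composability plus an application of \REFdef{def:soundaspec} once an abstract closed-loop witness is in hand, and the final step from $\widehat{\sigma}\in\EPaths{\Sa,\Ca}$ to $\pia\in\CPaths{\Sa,\Ca}$ for \emph{every} state sequence $\xa_k\in\alpha(x_k)$) matches the paper. The gap is exactly at the point you flag as the main obstacle: the construction of a single infinite $\Ca$-consistent witness $\widehat{\sigma}\in\Omega_{\beta,\gamma}(\sigma)\cap\EPaths{\Sa,\Ca}$ with $u_k\in\beta(\ua_k)$, and the ``K\"onig-style compactness'' you invoke does not close it. First, \REFthm{thm:ABoCD} is stated for the general systems of Sec.~III with possibly \emph{infinite} $Y,U,\Ya,\Ua$, so $\gamma(y_k)$ and $\beta^{-1}(u_k)$ need not be finite and the tree of partial witnesses is not finitely branching; K\"onig's lemma simply does not apply there. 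Second, and more fundamentally, even in the finite case K\"onig's lemma only extracts an infinite branch from a prefix-closed tree that already has nodes at every depth; it cannot reconcile per-prefix witnesses with mutually incompatible pasts. Equation \eqref{equ:ABoCD} only guarantees, for each $k$, \emph{some} projection $\widehat{\sigma}'\in\Omega_{\beta,\gamma}(\sigma|_{[0;k]})$ with $u_k\in\beta(\Ca(\widehat{\sigma}'))$; that projection need neither extend your previously built prefix nor have its earlier inputs follow $\Ca$, so your local claim ``some $\ya_k\in\gamma(y_k)$ must exist for which $u_k\in\beta(\Ca(\widehat{\sigma}|_{[0;k]}))$'' (for the \emph{given} prefix $\widehat{\sigma}|_{[0;k-1]}\ua_{k-1}$) is not justified, and hence the depth-$k$ nodes of the tree you would feed to K\"onig are not known to exist. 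Note also that the same alignment problem already bites your part (i) invariant: to extend an existential witness in $\EPrefs{\Sa,\Ca}$ by one step you need the concrete input chosen by $\C$ to refine $\Ca$'s output along \emph{that} witness, which \eqref{equ:ABoCD} alone does not give.

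The paper avoids any selection or compactness argument by proving a stronger, universally quantified step-wise invariant (\REFlem{lem:ABoCD}(b)--(d)): for every closed-loop prefix $\sigma_k\in\EPrefsk{S,\C}{k}$, \emph{every} projection $\sigmaa_k\in\Omega_{\beta,\gamma}(\sigma_k)$ whose inputs satisfy $\ua_i\in\beta^{-1}(u_i)\cap\Ca(\sigmaa_i)$ lies in $\EPrefsk{\Sa,\Ca}{k}$, all related abstract paths are compliant with it, and every value $\C$ may output is enabled via $\ON{(A2.i)}$. Because the invariant holds for \emph{all} such projections, the infinite witness is obtained greedily: append any $\ya_k\in\gamma(y_k)$ (non-empty by $\ON{(A3)}$) and the input dictated by $\Ca$, and the invariant guarantees you never leave the abstract closed loop, so in the limit the whole bundle of related abstract paths lands in $\CPaths{\Sa,\Ca}$ and \REFdef{def:soundaspec} applies. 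To repair your proposal you should replace the compactness step by such a strengthened inductive invariant (or, equivalently, argue about the controller as tracking one fixed abstract observation history when instantiating \eqref{equ:ABoCD}), rather than trying to patch together independently chosen per-prefix witnesses afterwards.
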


In order to prove \REFthm{thm:ABoCD} we need to show that the constructed controller $\C$ is feedback-composable with $S$ and only generates paths that are compatiple with the specification $\psi$. Intuitively, this requires to show that at every time step $k$ every input choice $u_k$ made by $\C$ via \eqref{equ:ABoCD} based on the already observed external sequence $\sigma_{k}=y_0u_0\hdots y_{k}$ ensures that $u_k$ is enabled in all possible states reached under this observation and that all possible paths of $S$ that conform with these observation trances are compliant with $\semantics{\psi}$ via $\Map{}$. In order to prove this claim, we first formalize some observations about all possibly generated $\sigma_{k}$ and all compliant paths $\pi_{k}\in\Extn{S}^{-1}(\sigma_{k})$ that result from the fact that $S\frr{\Q}{}\Sa$ in the following lemma, which is proven in \REFapp{sec:app:ABoCD}. 

\begin{lemma}\label{lem:ABoCD}
  Given the premises of \REFthm{thm:ABoCD} the following holds for all $k\in\N$. For all $\sigma_{k-1}=y_0u_0\hdots y_{k-1}\in\EPrefsk{S,\C}{k-1}$, $u_{k-1}\in\C(\sigma_{k-1})$ (if $k>0$) and $\sigma_{k}=y_0u_0\hdots y_{k-1}u_{k-1}y_{k}\in\EPrefs{S}$ holds that 
  \begin{compactenum}[(a)]
   \item $\sigma_k\in\EPrefsk{S,\C}{k}$,
  \item for all $\sigmaa_{k}=\ya_0\ua_0\hdots\ya_{k}\in\Omega_{\beta,\gamma}(\sigma_{k})$ s.t.\ $\ua_{i}\in\beta^{-1}(u_{i})\cap\Ca(\sigmaa_{i})\neq\emptyset$ for all $i\in[0;k-1]$ holds  that $\sigmaa_{k}\in\EPrefsk{\Sa,\Ca}{k}$,
   \item for all $\pi_{k}=x_0u_0\hdots x_{k-1}u_{k-1}x_k\in\Extn{S}^{-1}(\sigma_{k})$, $\sigmaa_k$ as in (c), and $\pia_k=\xa_0\ua_0\hdots\ua_{k-1}\xa_k\in\Omega_{\alpha,\beta}(\pi_k)$  s.t.\ the input sequence $\ua_0\hdots\ua_{k-1}$ matches the inputs of $\sigmaa_{k}$, holds that $\pia_k\in\Extn{\Sa}^{-1}(\sigmaa_{k})$, 
   \item $\emptyset\neq\C(\sigma_k)\in\Enab_{S}(\LastSn{S}{\sigma_k})$.
  \end{compactenum}
\end{lemma}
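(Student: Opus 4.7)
The plan is to prove the four claims simultaneously by induction on $k$, since they are tightly coupled: (c) captures the concrete-to-abstract state correspondence along any single path, (b) lifts that correspondence to the observation level and certifies membership in the abstract closed loop, (a) is the concrete closed-loop fact, and (d) uses (b), (c) and condition (A2.i) to certify the next controller move.

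\emph{Base case ($k=0$).} Here $\sigma_0=y_0\in H(\Xo)$, so (a) is immediate. For (c), I would fix any $x_0\in\Xo$ with $y_0\in H(x_0)$ and any $\xa_0\in\alpha(x_0)$; condition (A1) gives $\xa_0\in\Xoa$, and (A3) gives $\ya_0\in\gamma(y_0)\subseteq\gamma(H(x_0))\subseteq\Ha(\xa_0)$. This also proves (b), since then $\sigmaa_0=\ya_0\in\Ha(\Xoa)=\EPrefsk{\Sa,\Ca}{0}$. For (d), (b) produces some $\sigmaa_0\in\Omega_{\beta,\gamma}(\sigma_0)\cap\EPrefsk{\Sa,\Ca}{0}$, so feedback-composability of $\Ca$ makes $\Ca(\sigmaa_0)$ defined and enabled at every $\xa_0\in\LastSn{\Sa}{\sigmaa_0}$; applying (A2.i) then yields $\beta(\Ca(\sigmaa_0))\subseteq\Enab_{S}(x_0)$ for the corresponding $x_0$.

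\emph{Inductive step.} Assuming the four claims hold at step $k-1$, I would take $\sigma_k=\sigma_{k-1}u_{k-1}y_k\in\EPrefs{S}$ with $u_{k-1}\in\C(\sigma_{k-1})$. Claim (a) is direct from the IH and the definition of $\EPrefsk{S,\C}{k}$. For (c), pick $\pi_k=\pi_{k-1}u_{k-1}x_k\in\Extn{S}^{-1}(\sigma_k)$ and a compatible $\pia_k=\pia_{k-1}\ua_{k-1}\xa_k$ as in the statement; since $\ua_{k-1}$ is the input at position $k-1$ in $\sigmaa_k$, we have $u_{k-1}\in\beta(\ua_{k-1})$, so (A2.ii) at $(x_{k-1},\xa_{k-1})$ yields $\xa_k\in\alpha(x_k)\subseteq\Fa(\xa_{k-1},\ua_{k-1})$, and (A3) gives $\ya_k\in\Ha(\xa_k)$, completing (c). Claim (b) then follows by combining (c) with $\ua_{k-1}\in\Ca(\sigmaa_{k-1})$ (well-defined by IH (b) and feedback-composability of $\Ca$) to extend the abstract closed-loop prefix by one step.

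\emph{Main obstacle: claim (d).} The subtle point is that $\Enab_{S}$ of a set of states is an \emph{intersection} of per-state enabled sets, so I must show every element of $\beta(\Ca(\Omega_{\beta,\gamma}(\sigma_k)))$ lies in $\Enab_{S}(x_k)$ for \emph{every} $x_k\in\LastSn{S}{\sigma_k}$. For each such $x_k$ I would fix a witness path $\pi_k\in\Extn{S}^{-1}(\sigma_k)$ ending at $x_k$ and any $\sigmaa_k\in\Omega_{\beta,\gamma}(\sigma_k)$ satisfying the hypothesis of (b); invoking (c) with a $\pia_k$ whose states lie in $\alpha(\cdot)$ and whose inputs match those of $\sigmaa_k$ extracts $\xa_k\in\alpha(x_k)\cap\LastSn{\Sa}{\sigmaa_k}$, after which feedback-composability of $\Ca$ yields $\Ca(\sigmaa_k)\in\Enab_{\Sa}(\xa_k)$, and (A2.i) gives $\beta(\Ca(\sigmaa_k))\subseteq\Enab_{S}(x_k)$. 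Non-emptiness of $\beta(\Ca(\Omega_{\beta,\gamma}(\sigma_k)))$ is a by-product of the same construction. The delicate bookkeeping is ensuring that the $\pia_k$ used when invoking (c) is consistent with the $\sigmaa_k$ used when invoking (b); this is exactly what the input-matching clause in the hypothesis of (c) is designed to guarantee.
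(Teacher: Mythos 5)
Your proposal is correct and takes essentially the same route as the paper's proof: a simultaneous induction over $k$, with the base case resting on (A1)/(A3) and $\EPrefsk{\Sa,\Ca}{0}=\Ha(\Xoa)$, the step on the induction hypotheses (b)/(c), feedback-composability of $\Ca$, and (A2.i)/(A2.ii)/(A3), and the intersection-lifting of $\Enab$ in claim (d) handled exactly as in the paper. One cosmetic remark: when you apply (A2.ii) inside claim (c) you should state its precondition $\ua_{k-1}\in\Enab_{\Sa}(\xa_{k-1})$ at that point (it follows from IH (b), $\ua_{k-1}\in\Ca(\sigmaa_{k-1})$ and IH (c)), rather than only implicitly via the later remark in (b).
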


Intuitively, \REFlem{lem:ABoCD} shows that no matter how the non-determinism in the formulation of \eqref{equ:ABoCD} is resolved, the resulting control input to $S$ ensures that this system only stays on paths that are related to paths of $(\Sa,\Ca)$.
I.e., given an observation $\sigma_k$ then any projection of this sting to $\Sa$ via $\Omega_{\beta,\gamma}$ that results in a string that is actually possible in $(\Sa,\Ca)$ (and we know that at least one such sting exits, i.e., the one that corresponds to the observation made in $\Sa$ while generating the corresponding inputs for instances $i<k$) all abstract inputs enabled by $\Ca$ and all possible refinements of this input via $\beta$ are actually enabled in $S$. I.e., letting $\C$ choose such an input results in a non-blocking behaviour of the closed loop. Then the definition of eFRR ensures that the resulting traces always stay related. With this intuition it is not surprising that \REFlem{lem:ABoCD} allows us to prove \REFthm{thm:ABoCD} under the assumption that $(\pa,\Mapa{})$ is a sound abstract specification as in \REFdef{def:soundaspec}.

\begin{proof}
It immediately follows from \REFlem{lem:ABoCD} (a) and (d) that $\C$ is feedback-composeable with $S$ and it remains to show that $\Map{\CPaths{S, \C}}\subseteq\semantics{\psi}$. 
We recall from the definition of $\CPaths{S,\C}$ that $\pi=x_0u_0x_1\hdots\in\CPaths{S,\C}$ iff $|\pi|=\infty$ and $\pi|_{[0;k]}\in\Extn{S}^{-1}(\EPrefsk{S,\C}{k})$ for all $k\in\N$. Now it follows from \REFlem{lem:ABoCD} (b/c) that there exists an input sequence $\ua_0\ua_1\hdots$ s.t.\ for all $\pia=\xa_0\ua_0\xa_1\hdots$ with $\xa_k\in\alpha(x_k)$ holds that $\pia|_{[0;k]}\in\Extn{\Sa}^{-1}(\EPrefsk{\Sa,\Ca}{k})$ and therefore $\pia\in\CPaths{\Sa,\Ca}$. As $\Ca\in\WIN(\Sa,\pa,\Mapa{})$ we further have 
$\Mapa{\CPaths{\Sa, \Ca}}\subseteq\semantics{\pa}$ and therefore $\Mapa{\pia_k}\subseteq\semantics{\pa}$. With this, it follows from \REFdef{def:soundaspec} that $\Map{\pi}\subseteq\semantics{\psi}$, what proves the claim.
\end{proof}

\smallskip
\noindent\textbf{Algorithmic ABoCD.}
So far, we have  defined a sound ABoCD framework for an output feedback control problem $\tuplel{S,\psi,\Map{}}$. In the remainder of this paper we will target the problem of algorithmically computing
\begin{compactenum}[(I)]
 \item a \emph{sound abstraction} $\Sa$ and a \emph{sound abstract specification} $(\pa,\Mapa{})$ s.t.\ 
% \begin{equation}
 $\tuplel{S,(\psi,\Map{})}\frr{}{} \tuplel{\Sa,(\pa,\Mapa{})}$,
% \end{equation}
 \item  an output feedback controller $\Ca\in\WIN(\Sa,\pa,\Mapa{})$.
\end{compactenum}
 If we solve these two algorithmic challenges, we can apply \REFthm{thm:ABoCD} to obtain a sound controller  $\C\in\WIN(S,\psi,\Map{})$ for the original ABoCD problem via \eqref{equ:ABoCD}. 
 
 We first discuss step (I) in \REFsec{sec:Abst:A} for a particular class of systems $S$, which is a straightforward extension of the constructions in \cite{ReissigWeberRungger_2017_FRR} which are implemented in the tool \texttt{SCOTS} \cite{Scots}. Then, in \REFsec{sec:Abst:B}, we provide the second main contribution of this paper, which is solving step (II) by utilizing \emph{bounded synthesis} \cite{schewe2007bounded} which is implemented in the tool \texttt{BoSy} \cite{faymonville2017bosy}.

\section{Constructing Sound Finite Abstractions}\label{sec:Abst:A}

Within this section we follow the grid-based abstraction technique developed in \cite{ReissigWeberRungger_2017_FRR} for non-linear systems with disturbances. 
This abstraction process starts with a continuous non-linear \emph{control system} $\Sigma$ which is first time-discretized into a \emph{system} $S$ of the form introduced in \REFsec{sec:prelim}. This system $S$ has infinite input, state and output spaces. It is therefore  further abstracted into a \emph{finite system} $\Sa$ which can be used for symbolic controller synthesis.

In the following we recall this abstraction process from \cite{ReissigWeberRungger_2017_FRR} an discuss the special treatment of actuation and observation constrains and the abstraction of the specification. We show that our definition of sound abstractions (Def.~\ref{def:SoundAbs}) and sound abstract specifications (Def.~\ref{def:soundaspec}) is readily fulfilled by this abstraction procedure.

\smallskip
\noindent\textbf{Control System.}
A \emph{control system} $\Sigma = (X, \Xo, U, W, f,Y,h)$ consists of a continuous state space $X= \real{n}$, a set of initial states $\Xo\subseteq X$, a non-empty compact set of inputs $U\subseteq\real{m}$, a continuous output space $Y\subseteq\real{r}$, a compact cell $W\subseteq X$, and nonlinear (differential) inclusions% of the form
\begin{subequations}\label{equ:def_fh}
 \begin{align}
 \dot{\xi}&\in f(\xi(t),\mu(t))+W~\text{and}\label{equ:def_f}\\
  \nu(t)&\in h(\xi(t)),\label{equ:def_h}
\end{align}
\end{subequations}
where $f(\cdot,u)$ is locally Lipschitz for all $u\in U$.

\smallskip
\noindent\textbf{Continuous Transition System.}
A control system can be \emph{time-discretized} to obtain a system $S$ as defined in \REFsec{sec:prelim}.  
I.e., given a time sampling parameter $\tau>0$, we can define the system $S=(X,X,U,F,Y,H)$ associated with $\Sigma$ as follows.
First, given an initial state $\xi(0)\in X$, and a constant input trajectory $\mu_u:[0,\tau]\rightarrow U$ which maps every $t\in [0,\tau]$ to the same $u\in U$, 
a solution of the inclusion in \eqref{equ:def_f} 
on $[0,\tau]$ is an absolutely continuous function $\xi:[0,\tau]\rightarrow X$  
that fulfills \eqref{equ:def_f} for almost every $t\in[0,\tau]$. We collect all such solutions in the set $\ON{Sol}_f(\xi(0),\tau,u)$. 
Then the transition and output functions of $S$ are defined s.t.\ for 
all $x\in X$ and for all $u \in U$ it holds that $x'\in F(x,u)$ and $y'\in H(x')$ iff there exists a solution $\xi\in\ON{Sol}_f(x,\tau,u)$ 
s.t.\ $\xi(\tau)=x'$ and $y'\in h(x')$. 

\smallskip
\noindent\textbf{Finite Abstract System.}
Following \cite{ReissigWeberRungger_2017_FRR} one can now apply a grid-based discretization of the state space of $S$ to obtain a system $\Sa$ with finitely many states. For this, one usually fixes a grid parameter $\eta\in\real{}_{>0}^n$ and a region of interest defined as a hyper-rectangle $\Theta = \hyint{\alpha, \beta}$, s.t.\ $\beta - \alpha$ is an integer multiple of $\eta$. Then one defines the finite abstract state space $\Xa$ as a set of hyper cells which cover $\Theta$ with grid-aligned cells $\hyint{a,b}$ s.t.\ $b-a=\eta$, while covering the rest of the state space with \enquote{overflow-cells} of the form $\hyint{
\{-\infty\}^n,\alpha}$ and $\hyint{\beta,\{\infty\}^n}$. These \enquote{overflow-cells} are then treated as obstacles and added to the specification.

For the discretization of the output space $Y$ one can impose a very similar grid-based discretization with a possibly different grid parameter $\eta'$. This would allow us to capture the example in \REFfig{fig:EX2}. On the other hand, we can also consider the case where given observation constrains impose a finite set of observations $\Ya$ (as in \REFfig{fig:EX1}). In both cases, $\Ya$ is a finite cover of $Y$.

To discretize the input space, one usually restricts attention to a finite subset of \enquote{representative} inputs $\Ua\subseteq U$. This conforms, on one hand, with a grid-based discretization of inputs (similar to $X$ and $Y$) and picking one representative per grid cell\footnote{Our framework also allows to capture imprecise actuation, (i.e., an actuator which always has an error bounded by some $\varepsilon$). This would result in a non-trivial refinement map $\beta$ in \REFthm{thm:abstraction} and would require some adjustments in the definition of $\Sa$ that we omit due to space constrains.}.

Given these finite state, input and output sets one can define the \emph{finite abstract system} $\Sa=(\Xa,\Xa,\Ua,\Fa{},\Ya,\Ha{})$ of $S$ s.t.\footnote{We use the technique explained in \cite{ReissigWeberRungger_2017_FRR} and implemented in 
\texttt{SCOTS} \cite{Scots} to over-approximate the set $\set{\cup_{x\in\xa}\ON{Sol}_f(x,\tau,\ua)}$.}
$\xa'\in\Fa{}(\xa,\ua)$ iff $\set{\cup_{x\in\xa}\ON{Sol}_f(x,\tau,\ua)} \cap \xa' \neq \emptyset$, and $\ya\in\Ha(\xa)$ iff $H(\xa)\cap\ya\neq\emptyset$. 

In addition to the abstract system, we define the maps $\Imapa{}$ and $\Omapa{}$ to interpret $\psi$ over $\Sa$ s.t.\
$\mu\in\Imapa{\ua}$ iff $\mu\in\Imap{\ua}$ and $\lambda\in\Omapc{\xa}$ iff $\lambda\in\Omap{\xa}$. In this case we have $\pa:=\psi$.
% % 

\smallskip
\noindent\textbf{Soundness.}
We have the following expected result on the soundness of the outlined abstraction procedure. % which is proven in \cite{SchmuckZareian_CDC21_extended}.

\begin{theorem}\label{thm:abstraction}
 Let $S$ be the time-discretized system associated with the control system $\Sigma$ and $\Sa$ its grid-based abstraction. Further, let $\psi$ be an LTL specification interpreted over $S$ and $\Sa$ via the maps $\Imap{}$ and $\Omap{}$, and $\Imapa{}$ and $\Omapa{}$, respectively. Then
 $$\tuplel{S,(\psi,\Map{})}\frr{}{} \tuplel{\Sa,(\psi,\Mapa{})}. $$
\end{theorem}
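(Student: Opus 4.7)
The plan is to construct an explicit eFRR $\Q=\tuplel{\alpha,\beta,\gamma}$ between $S$ and $\Sa$ and then verify conditions (A1)--(A3) of Definition~\ref{def:SoundAbs} together with the sound-abstract-specification condition of Definition~\ref{def:soundaspec}. The natural candidate, given that $\Xa$ is a cover of $X$, $\Ua\subseteq U$, and $\Ya$ is a cover of $Y$, is to take
\begin{equation*}
\alpha(x):=\SetComp{\xa\in\Xa}{x\in\xa},\quad \beta(\ua):=\set{\ua},\quad \gamma(y):=\SetComp{\ya\in\Ya}{y\in\ya},
\end{equation*}
i.e., the cell-membership relations in the state and output spaces, and the identity injection on inputs. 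Since $\Xa$, $\Ya$ are covers, $\alpha(x)$ and $\gamma(y)$ are non-empty for every $x$, $y$.

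Next I would verify the three eFRR axioms. (A1) is immediate because $\Xoa=\Xa$ covers $X$. For (A2.i), I take $\xa\in\alpha(x)$ and $\ua\in\Enab_{\Sa}(\xa)$; I need $\ua\in\Enab_{S}(x)$, which reduces to non-emptiness of $\ON{Sol}_f(x,\tau,\ua)$ --- this is standard under the local Lipschitz hypothesis on $f(\cdot,u)$ and the assumption (implicit in the time-discretization step) that solutions extend up to time $\tau$ from every state. For (A2.ii), given $x'\in F(x,\ua)$, there is $\xi\in\ON{Sol}_f(x,\tau,\ua)$ with $\xi(\tau)=x'$; since $x\in\xa$, $\xi$ also belongs to $\cup_{x''\in\xa}\ON{Sol}_f(x'',\tau,\ua)$, so for any $\xa'\in\alpha(x')$ (i.e., $x'\in\xa'$) this union intersects $\xa'$, giving $\xa'\in\Fa(\xa,\ua)$ by the definition of $\Fa$ (note the use of an over-approximation here only enlarges $\Fa$, preserving the inclusion). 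For (A3), any $y\in H(x)$ and any $\ya\in\gamma(y)$ satisfy $y\in H(\xa)\cap\ya$, hence $\ya\in\Ha(\xa)$.

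Finally, I would establish the sound-abstract-specification condition. Since $\beta$ is the identity, the only input sequence compatible with a concrete path $\pi=x_0u_0x_1\hdots$ is $\ua_i=u_i$, which indeed satisfies $u_i\in\beta(\ua_i)$. For any abstract path $\pia=\xa_0u_0\xa_1\hdots$ with $\xa_i\in\alpha(x_i)$, the definitions $\Imapa{\ua}=\Imap{\ua}$ and $\Omapa{\xa}=\Omap{\xa}$ (lifted via union to the cell) together with $x_i\in\xa_i$ yield $\Imap{u_i}\subseteq\Imapa{u_i}$ and $\Omap{x_i}\subseteq\Omapa{\xa_i}$. Unfolding the definitions of $\Map{\pi}$ and $\Mapa{\pia}$ component-wise then gives the key inclusion $\Map{\pi}\subseteq\Mapa{\pia}$. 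Consequently, $\Mapa{\pia}\subseteq\semantics{\psi}$ implies $\Map{\pi}\subseteq\semantics{\psi}$, which is exactly Definition~\ref{def:soundaspec}.

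The main obstacle is conceptual rather than computational: one must verify (A2.i) carefully, because it is the one place where the grid-based construction meets the continuous dynamics and implicitly requires global existence of solutions up to the sampling time $\tau$ --- the tool \texttt{SCOTS} handles this by construction, but stating the hypothesis explicitly is essential for soundness. Everything else reduces to unpacking the definitions and exploiting that cell-membership is preserved by the over-approximation used to define $\Fa$ and $\Ha$.
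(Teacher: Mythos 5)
Your proposal is correct and follows essentially the same route as the paper's proof: the same cell-membership relations $\alpha$, $\gamma$ and identity $\beta$, the same verification of (A1)--(A3) from the construction of $\Fa$ and $\Ha$, and the same unfolding of $\Mapa{}$ for the sound-abstract-specification condition. The only difference is that you make explicit the forward-completeness assumption (solutions existing up to time $\tau$) underlying (A2.i), which the paper simply asserts by declaring $S$ fully input-enabled by definition.
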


\begin{proof}
 We define $\alpha$, $\beta$ and $\gamma$ s.t.\ $\xa\in\alpha(x)$ iff $x\in\xa$, $\beta(\ua)=\ua$ and $\ya\in\gamma(y)$ iff $y\in\ya$. We first prove that $S\frr{Q}{}\Sa$ by showing that (A1)-(A3) in \REFdef{def:SoundAbs} holds.\\
 \begin{inparaitem}[$\blacktriangleright$]
  \item (A1) Follows from the fact that the initial states are not restricted (i.e., $\Xo=X$ and $\Xao=\Xa$) and $\Xa$ is a cover of $X$.\\
  \item (A2.i) Follows from the fact that $S$ is input enabled by definition, i.e., $\Enab{S}(x)=U$ for all $x\in X$. With this, it follows from the definition of $\Fa$ that $\Sa$ is also fully input enabled, i.e., $\Enab{\Sa}(\xa)=\Ua$ for all $\xa\in\Xa$. With this, the claim directly follows.\\
  \item (A2.ii) We pick $\ua\in\Ua$ and observe that $\beta(\ua)=\ua$ by construction. Now it follows from the construction of $\Fa$ that for every $\xa\in\alpha(x)$ we have $\xa'\in\Fa(\xa,\ua)$ if there exists some $x'\in F(x,\ua)$ s.t.\ $x'\in\xa'$. This immediately implies the claim.\\ 
  \item (A3) Follows directly from the definition of $\Ha$.\\
 \end{inparaitem}
It remains to show that $(\psi,\Mapa{})$ is a sound abstract specification. As $\pa=\psi$ and $\beta(\ua)=\ua$ this, however, follows immediately from the definition $\Mapa{}$.
\end{proof}

\section{Output-Feedback Controller Synthesis for State-Based Specifications }\label{sec:Abst:B}

Within this section we consider a system $\Sa=(\Xa,\Xao,\Ua,\Fa,\Ya,\Ha)$ with \emph{finite} sets $\Xa$, $\Ua$ and $\Ya$ which allows to interpret the \emph{original LTL specification} $\psi$ over $\Sa$ via maps $\Imapa{}:\Ua\setfun\I$ and $\Omapa{}:\Xa\setfun\O$, defining the path map $\Mapa{}$. 
Our  goal is to construct an \emph{output-feedback controller} $\Ca\in\WIN(\Sa,\psi,\Mapa{})$. 

In order to formalize this construction we need to introduce some additional notation.

\subsection{Preliminaries}

\smallskip 
\noindent\textbf{Transition Systems.}
We define finite transition systems as tuples $T=\tuple{Q,Q_0,\Sigma,\delta}$ where $Q$ is a finite set of states, $Q_0\subseteq Q$ is a set of initial states, $\Sigma$ is a finite alphabet and $\delta: Q\times\Sigma\setfun Q$ is a set valued transition function. We call $T$ deterministic if $|\delta(q,\sigma)|\leq 1$ for all $q\in Q$ and $\sigma\in\Sigma$.

% and $\acc\subseteq Q$ is a set of rejecting states. 
A sequence $\pi=q_0\sigma_0q_1\sigma_1\hdots$ is a (maximal) path over $T$ if $q_0\in Q_0$ and $q_{k+1}\in\delta(q_k,\sigma_k)$ for all $k\in \dom{\pi}$ and either $|\pi|=\infty$ or $\delta(q_k,\sigma)=\emptyset$ for all $\sigma\in\Sigma$. Similar to systems we collect all maximal paths of $T$ in the set $\Paths{T}$ and their corresponding external sequences $\sigma_0\sigma_1\hdots$ in the set $\EPaths{T}$. 

Given an infinite string $\sigma=\sigma_0\sigma_1\hdots\in\Sigma^\omega$ we say that a path $\pi$ over $T$ is \emph{compliant} with $\sigma$ if $q_{k+1}\in\delta(q_k,\sigma_k)$ for all $k\in\dom{\pi}$ and either (i) $|\pi|=\infty$ or (ii) $|\pi|=k<\infty$ and $\delta(q_k,\sigma_k)=\emptyset$. 
We say that $T$ is \emph{complete} if for all $\sigma\in\Sigma^\omega$ exists at least one \emph{infinite} path over $T$ compliant with $\sigma$.
 
\smallskip 
\noindent\textbf{Universal Co-Büchi Automata.}
We define automata as tuples $\Aut=\tuple{T,\acc}$ where $T$ is a finite transition system and $\acc\subseteq Q$ is a set of rejecting states. %We transfer the definitions of paths and external sequences from $T$ to $\Aut$. 
A universal co-Büchi automata (UCA) is an automaton where $\mathcal{F}$ is interpreted as a \emph{Co-Büchi condition} and external sequences $\sigma$ are accepted based on \emph{universal branching}. 
I.e., a path $\pi$ of $T$ is accepted by a UCA $\Aut$ if $\pi$ visits the set $\mathcal{F}$ only \emph{finitely} often. Further, an external sequence $\sigma$ of $T$ is accepted by a UCA $\Aut$ iff \emph{all} runs $\pi$ compliant with $\sigma$ are accepted by $\Aut$\footnote{This is in contrast to non-deterministic acceptance, where only one compliant run needs to be accepting.}. 
We collect all \emph{infinite} external sequences that are accepted by a UCA $\Aut$ in its language $\Lang(\Aut)$. 

\smallskip 
\noindent\textbf{Models of UCAs.}
Let $M$ be a \emph{deterministic} transition system. Then we say that $M$ is a \emph{maximal model} of a UCA $\Aut$ whenever $\pi\in\EPaths{M}$ implies
$\pi\in\EPaths{T}$ s.t.\ $\pi$ is compliant with some $\sigma\in\Lang(\Aut)$ and there exists no $\pi'\in\EPaths{T}$ s.t.\ $\pi$ is a prefix of $\pi'$.

\begin{figure}
	\centering
	\includegraphics[width = 0.7 \columnwidth]{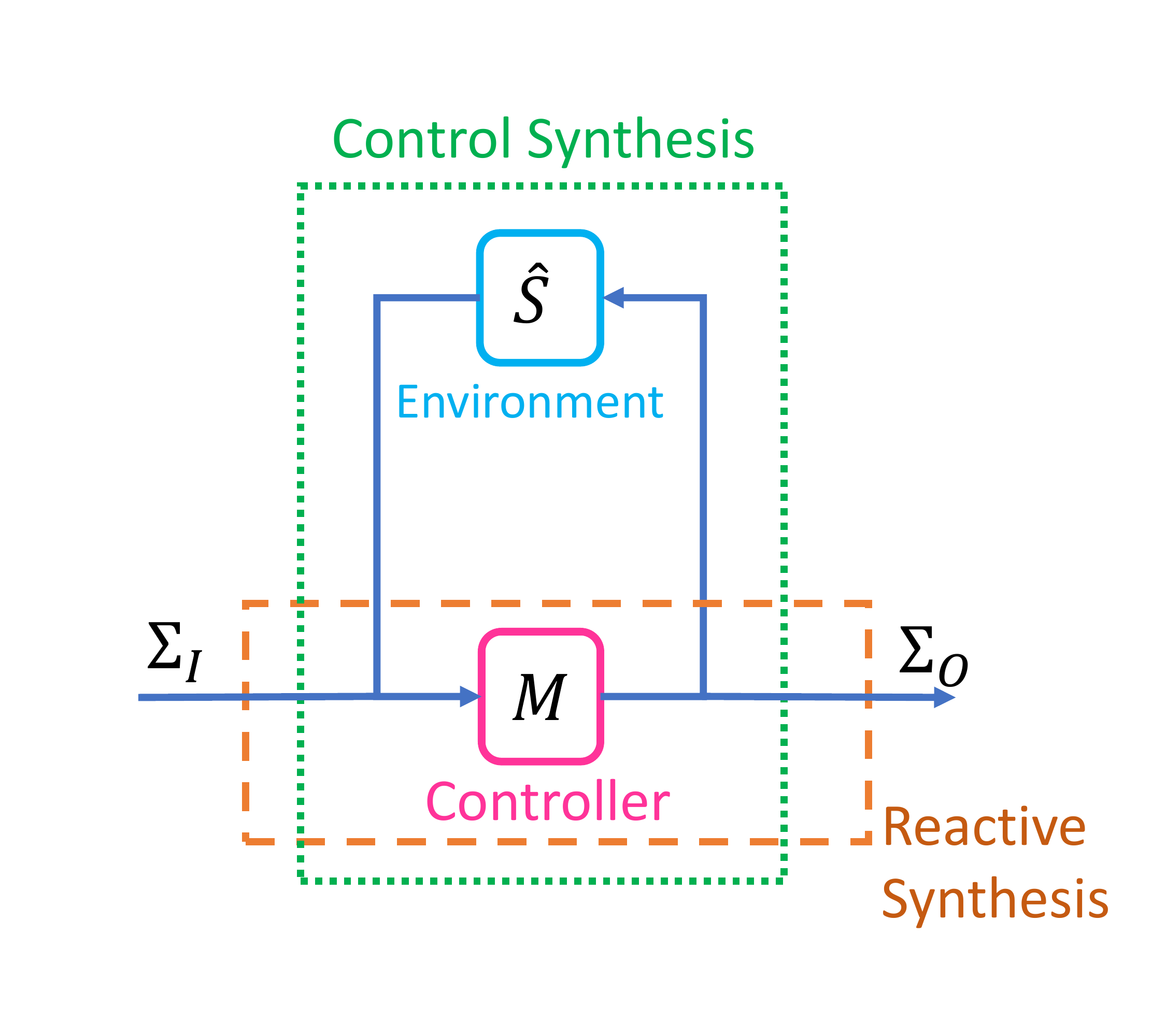}
	\vspace{-0.6cm}
	\caption{Schematic representation of the correspondence between (a) the solution of an LTL realizability problem via reactive synthesis (dashed brown) resulting in a deterministic transition system $M$ which only generates sequences in $\semantics{\psi}$ while receiving arbitrary inputs from an unrestricted environment, and
	(b) the controller synthesis problem $\tuplel{\Sa, \psi,\widetilde{P}}$  (dotted green) where the environment of $M$ is restricted by the dynamics of $\Sa$. The challenge in combining these two problems lies in the fact that for $\Sa$ we have $\Sigma_I=\Ua$ and $\Sigma_O=\Ya$ and for $M$ we have $\Sigma_I=\O$ and $\Sigma_O=\I$.  }
	\vspace{-0.6cm}
	\label{fig:Syn_outline}
\end{figure}

\subsection{The Realizability Problem for LTL formulas.}
In the field of reactive synthesis, a standard, well understood problem is to compute a so called \emph{reactive module} that \emph{realizes} a specification $\psi$ given in LTL against any input sequence imposed by some (unknown) environment (see the dashed box in \REFfig{fig:Syn_outline}). 
More formally, a solution to this realizability problem is a deterministic transition system $M$ which only generates sequences in $\semantics{\psi}$ while receiving arbitrary inputs from an unrestricted environment, i.e., assuming $\EPaths{\Sa}=\Sigma_I(\Sigma_O\Sigma_I)^\omega$ in the green dotted box in \REFfig{fig:Syn_outline}.

Such a solution to a given realizability problem can be computed in many different ways (see e.g. \cite[Ch.1]{Eherls_phdthesis} for an overview).
Within this paper, we use a rather modern approach, called \emph{Bounded Synthesis} \cite{schewe2007bounded}. This synthesis procedure avoids the computationally expensive step of determinizing a non-deterministic Büchi automaton via Safra's construction \cite{Safra_88}, and utilizes a direct synthesis technique via UCA, introduced in \cite{KupfermanVardi_safraless_2005}, instead. As a byproduct, bounded synthesis generates small transition systems $M$ (in terms of their state size) if they exist. We are summarizing the necessary features and constructions of \emph{Bounded Synthesis} in the next proposition. 

\begin{proposition}{\cite[Thm.5.1]{KupfermanVardi_safraless_2005}}\label{prop:Auts}
 Let $\psi$ be an LTL formula over a finite set of atomic propositions $\AP=\AP_I\dot{\cup}\AP_O$ and define $\I:=2^{\AP_I}$ and $\O:=2^{\AP_O}$. Then one can construct a \emph{complete} UCA $\Auts=\tuple{Q,Q_0,\I\times\O,\delta,\mathcal{F}}$ with $2^{O(|\psi|)}$ number of states and with language $\Lang(\Auts)=\semantics{\psi}$. We call $\Auts$ the UCA induced by $\psi$.
\end{proposition}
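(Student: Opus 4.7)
The plan is to proceed via the standard Safraless pipeline: reduce the statement to the Vardi–Wolper translation of LTL to nondeterministic Büchi automata (NBAs), then dualize. Concretely, I would first construct a nondeterministic Büchi automaton $\mathcal{A}_{\neg\psi}$ over the alphabet $\I\times\O$ accepting exactly the language $\I(\O\I)^\omega \setminus \semantics{\psi}$. By the classical Vardi–Wolper construction (taking the closure of $\neg\psi$ and forming the generalized Büchi automaton over consistent maximal subsets, then degeneralizing), this NBA has $2^{O(|\psi|)}$ states.

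Next I would dualize. Recall that an NBA and a UCA are syntactically identical objects with dual acceptance: a word is accepted by the UCA iff every compliant run visits the designated set only finitely often, whereas it is accepted by the NBA iff some compliant run visits the set infinitely often. Consequently, letting $\Auts$ be the automaton obtained from $\mathcal{A}_{\neg\psi}$ by reinterpreting its transition structure and accepting set as a co-Büchi condition under universal branching, we get
\[
\Lang(\Auts) \;=\; \I(\O\I)^\omega \setminus \Lang(\mathcal{A}_{\neg\psi}) \;=\; \semantics{\psi}.
\]
The state count is preserved, so $|Q|=2^{O(|\psi|)}$.

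The one subtle step, and the main obstacle, is ensuring \emph{completeness} of the resulting $\Auts$ in the sense defined in the excerpt: every infinite input sequence in $(\I\times\O)^\omega$ must admit at least one compliant infinite path. The Vardi–Wolper NBA is typically not input-enabled, because transitions are only defined for letters consistent with the local guess about which subformulas of $\neg\psi$ hold. To fix this, I would add a fresh sink state $q_\bot$, include $q_\bot$ in the rejecting set $\mathcal{F}$, and redirect every missing transition $\delta(q,\sigma)=\emptyset$ to $q_\bot$, together with a self-loop $q_\bot \in \delta(q_\bot,\sigma)$ for all $\sigma$. Under universal co-Büchi acceptance, any run that ever enters $q_\bot$ visits $\mathcal{F}$ infinitely often and is thus rejecting, so adding these transitions can only \emph{reject more} words on the UCA side; dually, on the NBA side this operation does not add any accepting runs because runs trapped in $q_\bot$ satisfy the co-Büchi rejection. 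Hence the language is preserved while completeness is achieved at the cost of a single extra state, keeping the bound $2^{O(|\psi|)}$ intact.

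Finally, I would remark that the nontrivial correctness of the Vardi–Wolper construction itself, namely $\Lang(\mathcal{A}_{\neg\psi})=\semantics{\neg\psi}$, is the classical ingredient I invoke as a black box; with it, the proposition follows by the two short steps above. The completion argument is the only place where care is needed to match the paper's precise notion of a complete transition system, but since we only redirect previously undefined transitions and send them to a universally rejecting sink, neither the language nor the asymptotic size changes.
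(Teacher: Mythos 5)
Your overall route (Vardi--Wolper NBA for $\neg\psi$, then dualize the same structure into a universal co-B\"uchi automaton, so that $\Lang(\Auts)$ is the complement of $\Lang(\mathcal{A}_{\neg\psi})$) is exactly the standard Safraless argument behind the cited result; the paper itself offers no proof and simply invokes Kupferman--Vardi, so there is nothing to compare against beyond that construction. The dualization step is fine under the paper's semantics (finite compliant paths visit $\mathcal{F}$ only finitely often, hence are universally accepting), and the $2^{O(|\psi|)}$ bound is unaffected.

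However, your completion step --- the one you yourself single out as the delicate point --- is wrong as stated. You add the fresh sink $q_\bot$ \emph{to the rejecting set} $\mathcal{F}$ and route all missing transitions there. Under universal co-B\"uchi acceptance this does not merely ``reject more words'' harmlessly: it changes the language. Take any $\sigma\in\semantics{\psi}$ such that some run of the Vardi--Wolper automaton for $\neg\psi$ dies on a prefix of $\sigma$ (this is the generic situation, since wrong local guesses kill runs). Before completion that run is a finite, hence accepting, compliant path; after your completion it continues into $q_\bot$, visits $\mathcal{F}$ infinitely often, and forces the UCA to reject $\sigma$. So $\Lang(\Auts)\subsetneq\semantics{\psi}$ in general, contradicting the claimed equality. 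Your sanity check on ``the NBA side'' conflates the two acceptance modes: runs trapped in $q_\bot\in\mathcal{F}$ are B\"uchi-\emph{accepting}, so the NBA language grows, which is precisely the dual statement of the UCA language shrinking. The fix is the opposite choice: make the sink non-rejecting ($q_\bot\notin\mathcal{F}$), equivalently complete the NBA for $\neg\psi$ with a non-accepting sink before dualizing. Then every new compliant run is universally harmless, the language is preserved, completeness holds, and the state bound is unchanged; with that correction the argument goes through.
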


\begin{proposition}[Bounded Synthesis \cite{schewe2007bounded}]\label{prop:Realize}
Let $\Aut$ be a UCA. Then there exists an algorithm to compute a maximal model $M$ of $\Aut$ if it exists. 
\end{proposition}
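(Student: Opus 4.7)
The plan is to follow the bounded-synthesis paradigm, which reduces the search for $M$ to a sequence of constraint satisfaction instances parameterized by an increasing bound $b$ on $|Q_M|$. The algorithm terminates once either a satisfying instance is found, from which $M$ is extracted, or an a priori upper bound $b^\star$ on the state size of any realizable model has been exhausted. The existence of such a $b^\star$ as an explicit function of $|Q|$ and $|\acc|$ is the first ingredient and follows from standard results on the size of finite-memory winning strategies in co-B\"uchi games played on the product of $M$ with $\Aut$.

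For a fixed $b$, I would encode the existence of a $b$-state deterministic transition system $M = (Q_M, q^0_M, \Sigma, \delta_M)$ together with a \emph{ranking annotation} $\lambda : Q_M \times Q \fun [0; b \cdot |Q|] \cup \{\bot\}$ as a propositional formula $\Phi_b$. Its clauses enforce (i) $\lambda(q^0_M, q_0) \neq \bot$ for every $q_0 \in Q_0$, and (ii) whenever $\lambda(q_M, q) \neq \bot$, $q' \in \delta(q, \sigma)$, and $q'_M \in \delta_M(q_M, \sigma)$, it holds that $\lambda(q'_M, q') \neq \bot$, together with $\lambda(q'_M, q') < \lambda(q_M, q)$ if $q' \in \acc$ and $\lambda(q'_M, q') \leq \lambda(q_M, q)$ otherwise. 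To capture maximality, I would additionally require $\delta_M$ to be total on $\Sigma$ wherever $\lambda$ is defined, so that completeness of $\Aut$ forces every external sequence in $\EPaths{M}$ to be infinite and hence not a strict prefix of any longer path over the underlying $T$.

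Given a satisfying assignment of $\Phi_b$, $\delta_M$ is read off directly. The finite-range map $\lambda$ is a well-founded witness that every run of $\Aut$ compliant with every external sequence of $M$ visits $\acc$ only finitely often, hence $\EPaths{M} \subseteq \Lang(\Aut)$, which together with the totality constraint yields the defining condition of a maximal model. The outer algorithm iterates $b = 1, 2, \ldots, b^\star$, calling a SAT solver on each $\Phi_b$, and reports failure only after exhausting $b^\star$.

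The main obstacle I expect is \emph{completeness} of the encoding: showing that whenever a maximal model of $\Aut$ exists at all, there is some $b \leq b^\star$ for which $\Phi_b$ is satisfiable. The forward direction---a satisfying $\Phi_b$ yields a valid $M$---is routine well-foundedness. The converse requires lifting a finite-memory winning strategy of the underlying co-B\"uchi game into a concrete annotation whose rank range is bounded by $b \cdot |Q|$. This is the technical heart of the safraless construction and is the step where care is needed to keep the range tight enough for tractability of $\Phi_b$ while still capturing every realizable $M$.
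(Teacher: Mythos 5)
Your proposal is correct, but note that the paper does not prove this proposition at all---it is imported verbatim as a citation to the bounded-synthesis result of Schewe and Finkbeiner, so there is no in-paper argument to compare against. Your sketch faithfully reconstructs the standard proof of that cited result (iteration over a size bound $b$ up to an a priori bound $b^\star$, a ranking annotation on the product of the candidate $b$-state deterministic system with $\Aut$ that is non-increasing and strictly decreasing at rejecting states, soundness by well-foundedness, and completeness by extracting a bounded-rank annotation from a finite-memory winning strategy), so it matches the approach the paper relies on.
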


\subsection{Output-Feedback Controller Synthesis}
The main obstacle in utilizing reactive synthesis to solve the \emph{output-feedback control problem} $\tuplel{\Sa, \psi,\Mapa{}}$, as indicated by the dotted green box in \REFfig{fig:Syn_outline}, lies in the fact that for $\Sa$ we have $\Sigma_I=\Ua$ and $\Sigma_O=\Ya$ and for $M$ we have $\Sigma_I=\O$ and $\Sigma_O=\I$ (as $\semantics{\psi}\subseteq\O(\I\O)^\omega)$ while their relationship is in general \emph{non-deterministic}. 

In the usual setting of ABCD with state-feedback and deterministic predicate maps (as in \cite{ReissigWeberRungger_2017_FRR}) or ABoCD with output-feedback and specifications defined over the observables (as, e.g.\ in \cite{MajumdarNS2020,khaled2020outputfeedback}), the relationship of the inputs and outputs of $\Sa$ and $M$ is \emph{deterministic}. In this case, utilizing reactive synthesis for control design is computationally simpler and done as follows. 

First, one constructs a deterministic observer of $\Sa$ over $\Ua$ and $\Ya$ (which is at most exponential in the state-size of $\Sa$), by using for example subset construction. Second, one uses the usual machinery in reactive synthesis to translate the specification $\psi$ first into a non-deterministic Büchi automaton (NBA) which is then determinized into a deterministic two-player game (again at most exponential in the state-size of the NBA). 
In the third and last step, the deterministic observation system determined from $\Sa$ is combined with this game. Due to the \emph{determnistic} relationship of all involved variables the resulting game is still deterministic and can be solved with common techniques from reactive synthesis.

In the presence of \emph{non-determnistic} relationships between the alphabets of the observer automaton and the game, the combination of both yields a \emph{non-deterministic} game that must again be determinized before it can be solved. This causes another exponential blow-up (see e.g.\ \cite{chatterjee2010complexity} for a discussion of this aspect).

Instead of taking this three-step approach, we first combine $\Sa$ and the UCA $\Auts$ induced by $\psi$ into a new UCA $\Autap$ over their product state space and then use bounded synthesis to extract a model for this product UCA. While this requires a determinization step which is exponential in the size $\Autap$, only one such step is required.
The main contribution of this section is to show that this procedure leads to a model $M$ over $\Ya\times\Ua$ that allows to extract a sound controller $\Ca\in\tuplel{\Sa, \psi,\Mapa{}}$. 

\smallskip
\noindent\textbf{Combining $\Sa$ and $\Auts$.} We first describe the formal combination of $\Sa$ and $\Auts$ and then discuss the properties of the resulting UCA.

\begin{definition}\label{def:synthUCA}
Let $\Sa=(\Xa,\Xao,\Ua,\Fa,\Ya,\Ha)$ be a finite system with predicate maps $\Imapa{}:\Ua\setfun \I$ and $\Omapa{}:\Xa\setfun \O$. Further, let $\psi$ be an LTL formula with induced complete UCA $\Auts=\tuple{Q,Q_0,\I\times\O,\delta,\mathcal{F}}$. Then we define the product of $\Sa$ and $\Auts$ as the  UCA $\Autap=\tuple{\Qa,\Qa_0,\Ya\times\Ua,\dap,\acca}$ s.t.\
\begin{compactitem}
 \item $\Qa=(\Xa\times Q)\cup\set{\bot}$
 \item $\Qa_0=\Xao\times Q_0$
\item $\acca=(\Xa\times \acc)\cup\set{\bot}$
\item $\{\bot\}=\dap(\qa,(\ya,\ua))$ iff either $\qa=\bot$ or $\qa=(\xa,q)$, $\ya\in\Ha(\xa)$ and $\Fa(\xa,\ua)=\emptyset$, and 
\item $\bot\neq(\xa',q')\in\dap((\xa,q),(\ya,\ua))$ iff $\ya\in\Ha(\xa)$, $\xa'\in\Fa(\xa,\ua)$ and there exists $\i\in\Imapa{\ua}$ and $\o\in\Omapa{\ya}$ s.t.\ $q'\in\delta(q,(\i,\o))$.
\end{compactitem}
We call $\Autap$ the UCA induced by $\Sa$, $\psi$ and $\Mapa{}$.
\end{definition}

Intuitively, the UCA $\Autap$ has the following properties.\\ 
\begin{inparaenum}[(i)]
 \item Every path $\pi$ of $\Sa$ corresponds to a set of paths in $\Autap$, each describing exactly one predicate sequence induced by $\pi$. This implies that an external sequence $\sigma\in\EPrefs{\Sa}$ is only accepted by $\Autap$ if all predicate sequences generated by any compliant path of $\sigma$ over $\Sa$ are allowed by the specification.\\ 
 \item Every blocking path of $\Sa$ is extended to an infinite path in $\Autap$ that visits $\bot$ infinitely often. This implies that only external sequences of $\Sa$ are accepted by $\Autap$ for which each chosen input $\ua_k$ is always enabled in all states reachable by any compliant path of $\sigma$ on $\Sa$.\\
\end{inparaenum}
In order to formalize this observation we define the set of blocking external sequences $\ON{IBlock}(\Sa)$ as follows. 

\begin{definition}\label{def:IBlock}
Given a system $\Sa$ we define its set of blocking sequences $\ON{IBlock}(\Sa)\subseteq \Ya(\Ua\Ya)^\omega$ as follows. An external sequence $\sigmaa=\ya_0\ua_0\ya_1\ua_1\hdots\in \Ya(\Ua\Ya)^\omega$ is said to be blocking on $\Sa$, i.e., $\sigmaa\in\ON{IBlock}(\Sa)$, iff there exists an index $k\in\mathbb{N}$ s.t.\ $\nu=\sigmaa|_{[0,k]}\in\EPrefs{\Sa}$ but $\ua_k\notin\Enab(\LastSn{\Sa}{\nu})$.
\end{definition}

We emphasize that the map $\Enab$ is lifted to sets of states using \emph{intersection}. Hence, a string $\sigma$ is called \emph{blocking on $\Sa$} as defined in \REFdef{def:IBlock}, if there exists some state
$q\in\LastSn{\Sa}{\nu}$ in which $u_k$ is not enabled, even if there might exist other states $q'\in\LastSn{\Sa}{\nu}$ which allow progress on $u_k$. 

We are now ready to formalize the above observations in the following proposition, which is proven in \REFapp{sec:app:synthUCA}.

\begin{proposition}\label{prop:synthUCA_prop}
Given the premises of \REFdef{def:synthUCA}, let $\sigmaa=\ya_0\ua_0\ya_1\ua_1\hdots\in \Ya(\Ua\Ya)^\omega$ be an external sequence. Then 
\begin{compactenum}[(i)]
\item $\sigmaa\in\EPaths{\Sa}\cap\Lang(\Autap)$ iff $\Mapa{\Extn{\Sa}^{-1}(\sigmaa)}\subseteq\semantics{\psi}$, %and 
\item $\sigmaa\in\Lang(\Autap)$ implies $\sigmaa\notin\ON{IBlock}(\Sa)$. 
\end{compactenum}
\end{proposition}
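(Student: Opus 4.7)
The two claims essentially reduce to an unfolding of \REFdef{def:synthUCA} combined with the universal-branching semantics of UCAs. My plan is to establish a tight correspondence between compliant paths of $\sigmaa$ in $\Autap$ on one side, and pairs $(\pia,\nua)$ consisting of a path $\pia\in\Extn{\Sa}^{-1}(\sigmaa)$ together with a predicate sequence $\nua\in\Mapa{\pia}$ and a compliant run of $\nua$ in $\Auts$ on the other. Once this bijection (modulo the sink $\bot$) is in place, both items of the proposition follow by inspection.

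\textbf{Claim (ii).} I will prove this first, as it is the simpler direction and isolates the role of $\bot$. Suppose $\sigmaa=\ya_0\ua_0\ya_1\ua_1\ldots\in\ON{IBlock}(\Sa)$. By \REFdef{def:IBlock} there is a minimal $k$ with $\nu=\sigmaa|_{[0,k]}\in\EPrefs{\Sa}$ and some $\xa\in\LastSn{\Sa}{\nu}$ for which $\ua_k\notin\Enab_{\Sa}(\xa)$, i.e., $\Fa(\xa,\ua_k)=\emptyset$. Choose any $q\in Q$ reached in $\Auts$ by reading the corresponding prefix. Then $(\xa,q)$ is reachable in $\Autap$ along $\sigmaa|_{[0,k]}$, and from \REFdef{def:synthUCA} the unique $\dap$-successor on $(\ya_k,\ua_k)$ is $\bot$, which then loops forever in $\acca$. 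Hence this compliant run visits $\acca$ infinitely often, so $\sigmaa\notin\Lang(\Autap)$ by the universal co-Büchi condition. This gives the contrapositive of (ii).

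\textbf{Claim (i), forward.} Assume $\sigmaa\in\EPaths{\Sa}\cap\Lang(\Autap)$. Pick any $\pia=\xa_0\ua_0\xa_1\ldots\in\Extn{\Sa}^{-1}(\sigmaa)$ and any $\nua=\o_0\i_0\o_1\ldots\in\Mapa{\pia}$, i.e., $\o_k\in\Omapa{\xa_k}\subseteq\Omapa{\ya_k}$ (using $\ya_k\in\Ha(\xa_k)$) and $\i_k\in\Imapa{\ua_k}$. Since $\Auts$ is complete (\REFprop{prop:Auts}), some run of $\Auts$ on $\nua$ exists; combining it stepwise with $\pia$ and using the second bullet of $\dap$ yields a compliant path $\pia^{\times}$ of $\Autap$ on $\sigmaa$ that never visits $\bot$. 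The hypothesis $\sigmaa\in\Lang(\Autap)$ forces $\pia^{\times}$ to visit $\acca$ only finitely often, which on the $Q$-component means the corresponding $\Auts$-run visits $\acc$ only finitely often. Since this holds for every $\Auts$-run on $\nua$ and $\Lang(\Auts)=\semantics{\psi}$ by \REFprop{prop:Auts}, we conclude $\nua\in\semantics{\psi}$. As $\pia$ and $\nua$ were arbitrary, $\Mapa{\Extn{\Sa}^{-1}(\sigmaa)}\subseteq\semantics{\psi}$.

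\textbf{Claim (i), backward.} Conversely, assume $\Mapa{\Extn{\Sa}^{-1}(\sigmaa)}\subseteq\semantics{\psi}$; this is only meaningful when $\Extn{\Sa}^{-1}(\sigmaa)\neq\emptyset$, so $\sigmaa\in\EPaths{\Sa}$ holds automatically in the non-trivial case. Take any path $\pia^{\times}$ of $\Autap$ compliant with $\sigmaa$. If $\pia^{\times}$ ever enters $\bot$, then by the construction of $\dap$ there must be a prefix of $\pia^{\times}$ witnessing that $\ua$ is disabled in some reachable abstract state; but then \REFdef{def:IBlock} is triggered, contradicting (ii) which we just proved combined with the existence of the bijection described above (alternatively, one argues directly that $\Extn{\Sa}^{-1}(\sigmaa)$ contains a path along which the disabling state is reached, contradicting $\sigmaa\in\EPaths{\Sa}$). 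Hence $\pia^{\times}$ stays in $\Xa\times Q$ and projects to a pair $(\pia,\rho)$ where $\pia\in\Extn{\Sa}^{-1}(\sigmaa)$ and $\rho$ is an $\Auts$-run on some $\nua\in\Mapa{\pia}$. By hypothesis $\nua\in\semantics{\psi}=\Lang(\Auts)$, so \emph{every} such $\rho$ visits $\acc$ only finitely often; therefore $\pia^{\times}$ visits $\acca$ only finitely often, establishing $\sigmaa\in\Lang(\Autap)$.

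\textbf{Main obstacle.} The delicate point is the backward direction of (i): one must argue that a compliant path of $\Autap$ cannot sneak into $\bot$ unnoticed while the premise $\Mapa{\Extn{\Sa}^{-1}(\sigmaa)}\subseteq\semantics{\psi}$ is only about predicate sequences and not about input-enabledness. The cleanest way is to observe that $\sigmaa\in\EPaths{\Sa}$ rules out any finite $\Sa$-prefix of $\sigmaa$ ending in a deadlocking $\ua$-application; this ensures the $\Xa$-component of every compliant $\Autap$-path is actually realizable in $\Sa$ and therefore the $\bot$ branch of $\dap$ is never taken. Making this book-keeping precise, while tracking how non-determinism in $\Omapa{}$, $\Imapa{}$, $\Ha$ and $\delta$ is simultaneously quantified universally by the co-Büchi acceptance, is the only non-routine part of the argument.
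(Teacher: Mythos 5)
Your overall strategy is the same as the paper's: you build the correspondence between compliant paths of $\Autap$ and pairs consisting of a path in $\Extn{\Sa}^{-1}(\sigmaa)$ and a run of $\Auts$ on a predicate sequence in $\Mapa{}$ (using completeness of $\Auts$), and you use $\bot$ as the sink recording blocked inputs. The paper packages this as the stronger \REFprop{prop:synthUCA_prop_extended} proved via \REFlem{lem:synthUCA_finite} and \REFlem{lem:synthUCA_spec}; your proof of (ii) and of the forward implication of (i) corresponds to those lemmas and is fine.

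The gap is in the backward direction of (i). Your primary justification that a compliant $\Autap$-path cannot enter $\bot$ — ``contradicting (ii) which we just proved'' — is circular: the hypothesis of (ii) is $\sigmaa\in\Lang(\Autap)$, which is exactly the conclusion you are trying to establish. Your fallback argument also fails: reaching a state in which $\ua_k$ is disabled along \emph{some} prefix compatible with $\sigmaa$ does \emph{not} contradict $\sigmaa\in\EPaths{\Sa}$, because $\Enab$ is lifted to sets of states by \emph{intersection}; one resolution of the nondeterminism in $\Ha$ and $\Fa$ may deadlock while another yields an infinite compliant path, so $\EPaths{\Sa}$ and $\ON{IBlock}(\Sa)$ are not mutually exclusive — this is precisely the point the paper emphasizes right after \REFdef{def:IBlock}. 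The premise $\Mapa{\Extn{\Sa}^{-1}(\sigmaa)}\subseteq\semantics{\psi}$ only constrains predicate sequences of \emph{infinite} compliant $\Sa$-paths and says nothing about blocked prefixes, so it cannot rule out $\bot$-visits. What is needed at this point is the extra condition $\sigmaa\notin\ON{IBlock}(\Sa)$, which is exactly how the paper states the backward direction in \REFprop{prop:synthUCA_prop_extended} (there it is an explicit hypothesis, and $\bot$-visits are excluded via \REFlem{lem:synthUCA_finite}); without adding that condition, or an argument deriving it from the stated premises (which do not yield it), your backward implication is not established. (A minor further point: in the forward direction you silently pass from $\Omapa{\ya_k}$ in the definition of $\dap$ to $\Omapa{\xa_k}$ via an unexplained inclusion; the paper's own proof of \REFlem{lem:synthUCA_spec} glosses the same typing issue, so this is not specific to your attempt.)
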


\smallskip
\noindent\textbf{Extracting a control strategy.}
As $\Autap$ is a UCA over $\Ya\times\Ua$ we can utilize \REFprop{prop:Realize} to extract a maximal model $M$ of $\Autap$ if it exists. We now show how we can define an output-feedback control strategy $\Ca$ from $M$.

\begin{definition}\label{def:MtoCa}
 Let $\Autap$ be the UCA induced by $\Sa$, $\psi$ and $\Mapa{}$ and $M=(Z,\set{a_0},\Ya\times\Ua,\alpha)$ a maximal model of $\Autap$.
 Then we say that $\Ca:\EPrefs{\Sa}\fun\Ua$ is an output-feedback control strategy induced by $M$ if for all $\nu=\ya_0\ua_0\ya_1\hdots\ya_k\in\EPrefs{\Sa}$ we have that (i) $\C(\nu)=\ua_k$ implies the existence of $z:=\LastSn{M}{\ya_0\ua_0\ya_1\hdots\ya_{k-1}\ua_{k-1}}$ s.t.\ $|\alpha(z,(\ya_k,\ua_k))|=1$ and (ii) $\Ca(\nu)=\emptyset$ only if $\nu\notin\EPrefs{M}$.
\end{definition}

The next theorem, which is the main result of this section, shows that the construction of an output-feedback controller $\Ca$ for a finite system $\Sa$ with specification $(\psi,\Mapa{})$ via \REFprop{prop:Auts}, \REFdef{def:synthUCA}, \REFprop{prop:Realize} and \REFdef{def:MtoCa}, is sound and complete. I.e., if the controller $\Ca$ obtained from $M$ is non-empty, it is sound (i.e., $\Ca\in\WIN(\Sa,\psi,\Mapa{})$), and if no non-empty controller can be derived from $M$ no solution to the given control problem exists (i.e. $\WIN(\Sa,\psi,\Mapa{})=\emptyset$).

\begin{theorem}\label{thm:Casound_extended}
Let $\Autap$ be the UCA induced by $\Sa$, $\psi$ and $\Mapa{}$, and $M$ be a maximal model $M$ of $\Autap$. Then (i) we have $\Ca\in\WIN(\Sa,\psi,\Mapa{})$ for every \emph{non-empty} controller $\Ca$ induced by $M$, and (ii) if no non-empty controller induced by $M$ exists, we have $\WIN(\Sa,\psi,\Mapa{})=\emptyset$.
\end{theorem}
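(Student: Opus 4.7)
The plan is to establish (i) and (ii) separately, with both halves hinging on \REFprop{prop:synthUCA_prop} and on the characterization of a maximal model of $\Autap$.

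For part (i), I will first show that any non-empty controller $\Ca$ induced by $M$ is feedback-composable with $\Sa$. The argument proceeds by induction on the length $k$ of prefixes in $\EPrefsk{\Sa,\Ca}{k}$. The induction step requires that for $\nu=\ya_0\ua_0\hdots\ya_k\in\EPrefsk{\Sa,\Ca}{k}$ the chosen input $\Ca(\nu)$ lies in $\Enab_{\Sa}(\LastSn{\Sa}{\nu})$. By \REFdef{def:MtoCa}, $\nu$ must lie in $\EPrefs{M}$, and since $M$ is a maximal model of $\Autap$, $\nu$ extends to some infinite external trace of $M$ that is compliant with some word of $\Lang(\Autap)$. \REFprop{prop:synthUCA_prop}(ii) then prevents that extension from being in $\ON{IBlock}(\Sa)$, which forces $\Ca(\nu)$ to be enabled in every state reachable by a compliant path of $\nu$ on $\Sa$. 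With feedback-composability in hand, I will then verify the specification: any $\sigma\in\EPaths{\Sa,\Ca}$ is by construction an external path of $M$ and hence a word in $\Lang(\Autap)$. Applying \REFprop{prop:synthUCA_prop}(i) yields $\Mapa{\Extn{\Sa}^{-1}(\sigma)}\subseteq\semantics{\psi}$, and since $\CPaths{\Sa,\Ca}\subseteq\Extn{\Sa}^{-1}(\EPaths{\Sa,\Ca})$, this gives $\Mapa{\CPaths{\Sa,\Ca}}\subseteq\semantics{\psi}$, as required.

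For part (ii), I will prove the contrapositive: assume some $\Ca'\in\WIN(\Sa,\psi,\Mapa{})$ exists and construct a deterministic transition system $M'$ over the alphabet $\Ya\times\Ua$ whose state space consists of the prefixes in $\EPrefs{\Sa,\Ca'}$ and whose transition function follows both $\Ca'$ and the external behavior of $\Sa$. Feedback-composability of $\Ca'$ makes every maximal path of $M'$ infinite and ensures $\EPaths{M'}\cap\ON{IBlock}(\Sa)=\emptyset$, while the winning condition $\Mapa{\CPaths{\Sa,\Ca'}}\subseteq\semantics{\psi}$ combined with \REFprop{prop:synthUCA_prop}(i) gives $\EPaths{M'}\subseteq\Lang(\Autap)$. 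Hence $\Autap$ admits a non-trivial model, so $\Lang(\Autap)$ is non-empty, and the maximal model $M$ supplied by \REFprop{prop:Realize} has non-empty external behavior. From such an $M$ a non-empty controller $\Ca$ satisfying both clauses of \REFdef{def:MtoCa} can always be extracted, contradicting the hypothesis of part (ii).

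The main obstacle will be the completeness direction in part (ii), which splits into two delicate points. First, I must formalize the conversion from the winning strategy $\Ca'$ to a bona-fide model $M'$ of $\Autap$, which requires aligning the alphabet $\Ya\times\Ua$ of $M'$ with the $\I\times\O$-labels of $\Auts$ via the non-deterministic maps $\Imapa{}$ and $\Omapa{}$ and carefully using both clauses of \REFprop{prop:synthUCA_prop} to account for blocking. Second, I must argue that the \emph{maximality} of $M$ implies the extractability of a non-empty controller in the sense of \REFdef{def:MtoCa}: concretely, at every reachable state of $M$ at least one successor exists, and by \REFprop{prop:synthUCA_prop}(ii) this successor transition corresponds to a pair $(\ya,\ua)$ that is genuinely realizable in $\Sa$, so that the uniqueness requirement of \REFdef{def:MtoCa}(i) can be met by a deterministic choice among $M$'s successors.
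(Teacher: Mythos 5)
Your part (i) is correct and essentially the paper's own route: feedback-composability is obtained from \REFprop{prop:synthUCA_prop}(ii) via $\ON{IBlock}(\Sa)$, and specification satisfaction from \REFprop{prop:synthUCA_prop}(i) after observing that every closed-loop trace follows $M$ and hence lies in $\Lang(\Autap)$; this matches Lemmas~\ref{lem:M_i} and~\ref{lem:Ca_i}, with your induction over prefixes being, if anything, more explicit than the paper's treatment.

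Part (ii) has a genuine gap at the final extraction step. From your auxiliary system $M'$ you retain only that $\Lang(\Autap)\neq\emptyset$, hence that the maximal model $M$ has some external behavior, and you then claim each transition of $M$ is \enquote{genuinely realizable in $\Sa$} by appeal to \REFprop{prop:synthUCA_prop}(ii). That inference fails: \REFprop{prop:synthUCA_prop}(ii) only excludes \emph{input-blocking} words, not words that $\Sa$ cannot produce at all. Indeed $\Lang(\Autap)$ contains every word all of whose compliant runs in $\Autap$ are finite --- e.g.\ words in which some observation $\ya_k$ lies in $\Ha(\xa)$ for no abstract state $\xa$ reachable under the prefix --- since finite runs trivially visit $\acca$ only finitely often (this is exactly case (i) of \REFprop{prop:synthUCA_prop_extended}). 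Consequently $\EPaths{M}\neq\emptyset$ does not give $\EPaths{M}\cap\EPaths{\Sa}\neq\emptyset$, which is what a \emph{non-empty} controller in the sense of \REFdef{def:MtoCa} (as it is used in \REFlem{lem:Ca_i}) actually requires. The paper avoids this by carrying a concrete witness: from a winning $\Ca'$ it extracts one $\sigmaa\in\EPaths{\Sa}$ with $\Mapa{\Extn{\Sa}^{-1}(\sigmaa)}\subseteq\semantics{\psi}$ (\REFlem{lem:CatoM_1}), concludes $\sigmaa\in\EPaths{\Sa}\cap\Lang(\Autap)$ via \REFlem{lem:synthUCA_spec} (\REFlem{lem:CatoM_2}), and only then invokes the \emph{maximality} of $M$ to place this particular $\sigmaa$ in $\EPaths{M}$ (\REFlem{lem:CatoM_3}), from which \REFdef{def:MtoCa} yields a non-empty controller. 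You in fact possess such a witness, since your construction gives $\EPaths{M'}\subseteq\EPaths{\Sa}\cap\Lang(\Autap)$, but you discard it when passing only language non-emptiness to $M$; the repair is to push one such word through the maximality of $M$ rather than through \REFprop{prop:synthUCA_prop}(ii). (Two minor points: the detour through $M'$ is heavier than needed --- a single winning closed-loop trace suffices, as in \REFlem{lem:CatoM_1} --- and $M'$ built on the state set $\EPrefs{\Sa,\Ca'}$ need not be finite, whereas the paper's transition systems are.)
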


The proof of \REFthm{thm:Casound_extended} is given in \REFapp{sec:app:Casound}. For soundness (\REFthm{thm:Casound_extended} (i)) the claim intuitively follows from the observation that \REFprop{prop:synthUCA_prop} (ii) ensures that $\Ca$ is feedback-composeable with $\Sa$ and (i) ensures that all paths of $(\Sa,\Ca)$ are compatible with $\psi$. 

Conversely, completeness (\REFthm{thm:Casound_extended} (ii)) is established by showing that whenever $\WIN(\Sa,\psi,\Mapa{})\neq\emptyset$ there exists a non-empty controller $\Ca$ induced by $M$. Intuitively, this follows from the observations that 
(a) $\WIN(\Sa,\psi,\Mapa{})\neq\emptyset$ implies the existence of some infinite path $\sigma\in\EPaths{\Sa}$ s.t.\ $\Mapa{\Extn{\Sa}^{-1}(\sigma)}\subseteq\semantics{\psi}$, which in turn implies (from \REFprop{prop:synthUCA_prop} (i)) that (b) $\sigma\in\Lang(\Autap)$ and therefore (c) $M$ is non-empty (as $M$ is defined to be maximal), implying the existence of a non-empty $\Ca$ by definition.

\section{Related Work}\label{sec:RelatedWork}

Our paper is closest related to \cite{ReissigWeberRungger_2017_FRR,MajumdarNS2020,khaled2020outputfeedback,MizoguchiUshio2018deadlockfree}.  

Our definition of sound abstractions and sound abstract specification is based on the definition of the same notions for state-feedback control in \cite{ReissigWeberRungger_2017_FRR}. Indeed, targeting all maps used in this paper to the special case discussed in \cite{ReissigWeberRungger_2017_FRR} we observe that both definitions coincide.
In particular, we recover the situation discussed in \cite{ReissigWeberRungger_2017_FRR} if $\I=U$, $\O=X$, $\Ia=\Ua$ and $\Oa=\Xa$, and therefore $\Imap{}=\Omap{}=\Imapa{}=\Omapa{}=\iota$. Further, we choose $\beta$ to be an identity map that is strictly defined on $\Ua$ and  have $Y=X$ and $H=\iota$. Therefore $\gamma$ coincides with $\alpha$.  With this, \REFdef{def:SoundAbs} coincides with the definition of FRR in \cite[Def. V.2]{ReissigWeberRungger_2017_FRR} and \REFdef{def:soundaspec} coincides with \cite[Def. VI.2]{ReissigWeberRungger_2017_FRR}.

In \cite{MajumdarNS2020} the authors consider output-feedback control design with specifications defined over a finite set of observables. This corresponds to the setting in this paper when defining $\O=Y=\Oa=\Ya$ and $\I=U=\Ia=\Ua$ implying that $\Omap{}=H$, $\Omapa{}=\Ha$ and $\Imap{},\Imapa{}$ are identity maps. Further, as $Y=\Ya$ and $U=\Ua$ the maps $\beta$ and $\gamma$ are also the identity maps. With this \REFdef{def:SoundAbs} coincides with \cite[Def.3.1]{MajumdarNS2020}\footnote{We remark that there was a typo in \cite[Def. 3.1]{MajumdarNS2020}. The set $\Ya$ in the definition of $\Sa$ must be $Y$.}. While the abstraction algorithm in \cite{MajumdarNS2020} does not require a grid-based discretization of the state space, it does not allow to handle state-based specifications.

Recently, Khaled et.al. proposed a similar notion of sound abstractions for symbolic output-feedback control in \cite{khaled2020outputfeedback}. Unfortunately, it seems to the authors that their definition of sound abstractions does not allow to prove a soundness result similar to \REFthm{thm:ABoCD}. Further, the authors only consider the abstract output-feedback controller synthesis problem either for specifications over the observables, similar to \cite{MajumdarNS2020}, or for abstractions that are detectable, i.e., where the state of the system becomes observable after a finite number of observations. Our controller synthesis procedure does not require these assumptions.  

Finally, \cite{MizoguchiUshio2018deadlockfree} solves the outlined ABoCD problem for safety specifications only, while we can handle arbitrary LTL properties.

\bibliographystyle{abbrv}
{%\tiny
\bibliography{reportbib}}

\newpage 
\appendix
\subsection{Proofs of \REFlem{lem:ABoCD}}\label{sec:app:ABoCD}

We prove all claims by induction over $k$.

 \begin{inparaitem}[$\blacktriangleright$]
   \item Base case ($k=0)$:\\
   \begin{inparaitem}[$\triangleright$]
    \item \textbf{(a)} We have $\sigma_0=y_0\in\EPrefs{S}$ iff $y_0\in H(\Xo)$. As $\EPrefsk{S,\C}{k}=H(\Xo)$ by definition, the claim follows.\\
%     \item (b)
%     We have $x_k\in\Extn{S}^{-1}(\sigma_{0})=\Extn{S}^{-1}(y_{0})=H^{-1}(y_0)\subseteq X_0$
%     $\pi_0=x_0\in\CPrefs{S}$ iff $x_0\in \Xo$. As  $\sigma_0=y_0\in H(\Xo)$, we have $\pi_0=x_0\in\Extn{S}^{-1}(y_0)$ iff $x_0\in \Xo$ by definition and therefore $x_0\in\CPrefsk{S,\C}{0}$ iff $x_0\in\CPrefs{S}$.\\
    \item \textbf{(b/c)} Observe that $\ya_0\in\Omega_{\beta,\gamma}(y_0)$ implies $\ya_0\in\gamma(y_0)$. Further, $x_0\in\Extn{S}^{-1}(y_0)$ implies $x_0\in\Xo$ and $y_0\in H(x_0)$. It then follows from (A1) that $\emptyset\neq\alpha(x_0)\subseteq \Xao$. Now observe that all $\alpha(x_0)=\Omega_{\alpha,\beta}(x_0)$ and it follows from  (A3) that for all $\xa_0\in\alpha(x_0)$ holds that $\ya_0\in\Ha(\xa_0)$ implying $\xa_0\in\Extn{\Sa}^{-1}(\ya_{0})$, proving (c).     
    As $\EPrefsk{\Sa,\Ca}{0}:=\Ha(\Xao)$ by definition, we further conclude $\ya_0\in\EPrefsk{\Sa,\Ca}{0}$, proving (b). \\
    \item \textbf{(d)} First, recall that $\Ca$ is feedback composable with $\Sa$. Therefore, $\ya_0\in\EPrefsk{\Sa,\Ca}{0}$ implies that $\Ca(\ya_0)\in\Enab(\LastSn{\Sa}{\ya_0})\neq\emptyset$. Further, it follows from (b) that for all $x_0\in\Extn{S}^{-1}(y_{0})$ holds that $\alpha(x_0)\subseteq\LastSn{\Sa}{\ya_{k}}$. This implies $\ua_{0}:=\Ca(\ya_0)\in\Enab(\alpha(x_{k}))$. Then it follows from (A2.i) that for any $u_{0}\in\beta(\ua_0)$ holds that $u_{0}\in\Enab(x_{0})$ and hence  
    $u_{0}\in\Enab_{S}(\LastSn{S}{y_0})$. With this, the claim follows from the constuction of $\C$.\\
   \end{inparaitem}

   \item Induction step from $k-1$ to $k$: 
   It follows from induction hypothesis (d)  that $\emptyset\neq\C(\sigma_{k-1})\subseteq\Enab_{S}(\LastSn{S}{\sigma_{k-1}})$ for any choice of $\sigma_{k-1}$. We consider any coice $u_{k-1}\in\C(\sigma_{k-1})$.\\
\begin{inparaitem}[$\triangleright$]
    \item \textbf{(a)} follows from the definition of $\EPrefsk{S,\C}{k}$. \\
    \item \textbf{(b/c)} We fix $\pi_k=x_0u_0\hdots x_k$ and $\sigmaa_k=\ya_0\ua_0\hdots\ya_{k}$ as required. Then it follows from induction hypothesis (c) that $\pia_{k-1}\in\Extn{\Sa}^{-1}(\sigmaa_{k-1})$ for any $\pia_{k-1}$ compatible with $\pi_{k-1}=\pi_k|_{[0;k-1]}$ and $\sigmaa_{k-1}=\sigmaa_k|_{[0;k-1]}$. This implies $\alpha(x_{k-1})\subseteq\LastSn{\Sa}{\sigmaa_{k-1}}$ by construction. 
    
    As $\sigmaa_{k-1}\in\EPrefsk{\Sa,\Ca}{k-1}$ from induction hypothesis (b) and $\Ca$ is a sound controller for $\Sa$ we know that $\ua_{k-1}\in\Enab(\LastSn{\Sa}{\sigmaa_{k-1}})\neq\emptyset$. This implies $\ua_{k-1}\in\Enab(\alpha(x_{k-1}))$. Then it follows from (A2.i) that $u_{k-1}\in\Enab(x_{k-1})$ and from (A2.ii) that $\alpha(F(x_{k-1},u_{k-1}))\subseteq\Fa(\xa_{k-1},\ua_{k-1})$ for all $\xa_{k-1}\in\alpha(x_{k-1})$.
    Now recall that $x_k\in F(x_{k-1},u_{k-1})$ which implies $\xa_{k}\in\Fa(\xa_{k-1},\ua_{k-1})$ for any $\xa_{k-1}\in\alpha(x_{k-1})$ and any $\xa_{k}\in\alpha(x_{k})$. This implies by construction that $\pia_k\in\Extn{\Sa}^{-1}(\sigmaa_{k})$ for any considered $\pia_k$, which proves claim (b).    
    
    Now, recall from the construction of $\sigmaa_k$ in (c) that $\ya_k\in\gamma(H(x_k))$. Then it follows from (A3) that $\ya_k\in\Ha(\xa_k)$ for any $\xa_{k}\in\alpha(x_{k})$. With this, claim (c) follows from the definition of $\EPrefsk{\Sa,\Ca}{k}$.\\
    \item \textbf{(d)} It follows from (b) that there exists a $\sigmaa_k\in\Omega_{\beta,\gamma}(\sigma_{k})$ s.t.\ $\sigmaa_{k}\in\EPrefsk{\Sa,\Ca}{k}$. As $\Ca$ is a sound controller for $\Sa$ it follows that for all such $\sigmaa_k$ we have $\Ca(\sigmaa_k)\neq\emptyset$ and $\Ca(\sigmaa_k)\in\Enab(\LastSn{\Sa}{\sigmaa_{k}})$. Now it follows form (c) that for any $\pi_k=x_0u_0\hdots x_k\in\Extn{S}^{-1}(\sigma_{k})$ holds that $\alpha(x_k)\subseteq\LastSn{\Sa}{\sigmaa_{k}}$. This implies $\ua_{k}:=\Ca(\sigmaa_k)\in\Enab(\alpha(x_{k}))$. Then it follows from (A2.i) that for any $u_{k}\in\beta(\ua_k)$ holds that $u_{k}\in\Enab(x_{k})$ and hence  
    $u_{k}\in\Enab_{S}(\LastSn{S}{\sigma_k})$. With this, the claim follows from the constuction of $\C$ in \eqref{equ:ABoCD}.
    \end{inparaitem}
   \end{inparaitem}

\subsection{Additional Proofs for \REFprop{prop:synthUCA_prop}}\label{sec:app:synthUCA}

In this section we formalize the intuition behind \REFprop{prop:synthUCA_prop}. In particular, we prove a stronger claim, formalized in the following 
\REFprop{prop:synthUCA_prop_extended}, from which \REFprop{prop:synthUCA_prop} follows as a corollary.

\begin{proposition}\label{prop:synthUCA_prop_extended}
Given the premises of \REFdef{def:synthUCA}, let $\sigma=y_0u_0y_1u_1\hdots\in Y(UY)^\omega$ be an external sequence. Then 
$\sigma\in\Lang(\Autap)$ iff $\sigma\notin\ON{IBlock}(\Sa)$ and either 
\begin{compactenum}[(i)]
\item $\sigma\notin\EPaths{\Sa}$, or 
\item $\sigma\in\EPaths{\Sa}$ and $\Mapa{\Extn{\Sa}^{-1}(\sigma)}\subseteq\semantics{\psi}$.
\end{compactenum}
\end{proposition}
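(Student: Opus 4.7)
My approach is to prove the biconditional directly, exploiting the two-component structure of non-$\bot$ states of $\Autap$: the $\xa$-coordinate tracks $\Sa$-runs and the $q$-coordinate tracks $\Auts$-runs. The sink $\bot$ is absorbing and lies in $\acca$, so any compliant $\Autap$-run that enters $\bot$ visits $\acca$ infinitely often and is rejected; conversely, a compliant $\Autap$-run that stays in $\Xa \times Q$ forever factors, via the third clause of $\dap$, into an infinite path $\pia \in \Extn{\Sa}^{-1}(\sigma)$ together with a compliant $\Auts$-run $\rho$ on some $\nua \in \Mapa{\pia}$. The two bridging tools used throughout are completeness of $\Auts$ (for lifting $\Sa$-prefixes to $\Autap$-prefixes) and non-emptiness of $\Imapa{u}$ and $\Omapa{\xa}$.

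For the direction $\sigma \in \Lang(\Autap) \Rightarrow$ (non-blocking and (i)/(ii)): First, I would argue by contrapositive that a blocking $\sigma$ is rejected. Given the witnessing prefix $\nu = y_0 u_0 \cdots y_k \in \EPrefs{\Sa}$ and a state $\xa_k \in \LastSn{\Sa}{\nu}$ with $\Fa(\xa_k, u_k) = \emptyset$, the corresponding $\Sa$-prefix combined with any $\Auts$-run (obtained via completeness) lifts to a compliant $\Autap$-prefix $((\xa_i, q_i))_{i \leq k}$ whose transition on $(y_k, u_k)$ is forced to $\bot$ by the first clause of $\dap$, contradicting acceptance. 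Second, assume $\sigma \in \EPaths{\Sa}$ and fix any $\pia = \xa_0 u_0 \xa_1 \cdots \in \Extn{\Sa}^{-1}(\sigma)$ and $\nua = \lambda_0 \mu_0 \lambda_1 \cdots \in \Mapa{\pia}$. For every $\Auts$-run $\rho = q_0 q_1 \cdots$ compliant with $\nua$, the sequence $((\xa_i, q_i))_{i \in \N}$ is a compliant $\Autap$-run (using $\mu_i \in \Imapa{u_i}$ and $\lambda_i \in \Omapa{\xa_i}$ as witnesses in the third clause of $\dap$); it never hits $\bot$, so by acceptance of $\sigma$ it visits $\acca$ only finitely, whence its $q$-projection $\rho$ visits $\acc$ only finitely, i.e., $\nua \in \Lang(\Auts) = \semantics{\psi}$.

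For the converse, I would first establish that non-blockingness forbids any $\bot$-entry: a compliant $\Autap$-run transitioning to $\bot$ from $(\xa_{k-1}, q_{k-1})$ on $(y_{k-1}, u_{k-1})$ exhibits, by tracing the $\xa$-components back, an $\Sa$-path prefix generating $y_0 u_0 \cdots y_{k-1}$ and ending at some $\xa_{k-1}$ where $u_{k-1}$ is disabled, witnessing $\sigma \in \ON{IBlock}(\Sa)$ (using that $\Enab$ is lifted by intersection). Consequently, every infinite compliant $\Autap$-run stays in $\Xa \times Q$ forever and projects to a pair $(\pia, \rho)$ with $\pia \in \Extn{\Sa}^{-1}(\sigma)$ and $\rho$ compliant with some $\nua \in \Mapa{\pia}$. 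Under case (i), no such $\pia$ exists, so every compliant $\Autap$-run on $\sigma$ is finite and therefore trivially visits $\acca$ only finitely often; under case (ii), $\nua \in \semantics{\psi}$ forces $\rho$ to visit $\acc$ only finitely, hence the $\Autap$-run visits $\acca$ only finitely. The main technical delicacy I anticipate is verifying that $\bot$-entries in $\Autap$ are characterized exactly by $\ON{IBlock}(\Sa)$ (rather than by partial disabling at only some states of $\LastSn{\Sa}{\nu}$, which is precisely what the intersection-based lifting of $\Enab$ rules out), together with the careful separation between finite and infinite compliant $\Autap$-runs under the co-Büchi condition.
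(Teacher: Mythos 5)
Your proof is correct and takes essentially the same route as the paper's: you characterize $\bot$-entries of $\Autap$ exactly via $\ON{IBlock}(\Sa)$ (using the intersection-lifted $\Enab$), factor every infinite non-$\bot$ compliant run into an $\Sa$-path together with a compliant $\Auts$-run on an induced predicate sequence, and use completeness of $\Auts$ to lift $\Sa$-prefixes --- which is precisely the content the paper isolates in its two auxiliary lemmas. The only difference is organizational: the paper proves these facts as standalone lemmas (later reused in the proof of the soundness/completeness theorem) and then combines them, whereas you give a single direct case analysis.
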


% In order to prove \REFprop{prop:synthUCA_prop_extended}, we first formally define the set $\ON{IBlock}(\Sa)$. 
% 
% \begin{definition}\label{def:IBlock}
% Given a system $\Sa$ we define its set of blocking sequences $\ON{IBlock}(\Sa)\subseteq \Ya(\Ua\Ya)^\omega$ as follows. An external sequence $\sigma=y_0u_0y_1u_1\hdots\in Y(UY)^\omega$ is said to be blocking on $\Sa$, i.e., $\sigma\in\ON{IBlock}(\Sa)$, iff there exists an index $k\in\mathbb{N}$ s.t.\ $\nu=\sigma|_{[0,k]}\in\EPrefs{\Sa}$ but $u_k\notin\Enab(\LastSn{\Sa}{\nu})$.
% \end{definition}
% 
% We emphasize that the map $\Enab$ is lifted to sets of states using \emph{intersection}. Hence, a string $\sigma$ is called \emph{blocking on $\Sa$} as defined in \REFdef{def:IBlock}, if there exists some state $q\
% in\LastSn{\Sa}{\nu}$ in which $u_k$ is not enabled, even if there might exist other states $q'\in\LastSn{\Sa}{\nu}$ which allow progress on $u_k$. We will come back to this point for the proof of \REFthm{thm:Casound}.

% Recalling that a UCA accepts a trace if its resulting run over $\Aut$ is finite, we see that we have constructed a UCA in \REFdef{} that only \emph{rejects} sequences that block on not enabled inputs. We collect these sequences in a set $\ON{IBlock}(\Sa)\subseteq \Ya(\Ua\Ya)^\omega$. Formally, an external infinite sequence $\sigma=y_0u_0y_1u_1\hdots$ belongs to $\ON{IBlock}(\Sa)$ iff there exists an index $k\in\mathbb{N}$ and a prefix $\nu=\sigma|_{[0,k]}$ s.t.\ $\LastSn{\Sa}{\nu}\neq\emptyset$ and $u_k\notin\Enab(\LastSn{\Sa}{\nu})$.
% With this, we have the following obvious proposition.

We now prove \REFprop{prop:synthUCA_prop_extended} in multiple steps. We first show that an external sequence $\sigma$ has a compliant \emph{infinite} path over $\Autap$ iff either $\sigma\in\EPaths{\Sa}$ or $\sigma\in\ON{IBlock}(\Sa)$. We further show that these two cases are disjointed and that we know that $\sigma\in\ON{IBlock}(\Sa)$ iff there exists a compliant \emph{infinite} path over $\Autap$ that visits $\bot$ infinitely often. This is formalized in the following lemma.

\begin{lemma}\label{lem:synthUCA_finite}
Given the premises of \REFdef{def:synthUCA}, let $\sigmaa\in \Ya(\Ua\Ya)^\omega$. Then there exists an \emph{infinite} path $\pia$ compliant with $\sigmaa$ iff $\sigmaa\in\EPaths{\Sa}\dot{\cup}\ON{IBlock}(\Sa)$. 
Further,  there exists a path $\pia$ compliant with $\sigmaa$ that visits $\bot$ infinitely often iff $\sigmaa\in\ON{IBlock}(\Sa)$.
\end{lemma}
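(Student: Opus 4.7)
The plan is to prove both biconditionals of the lemma by reducing them to a case analysis on whether the compliant path in $\Autap$ ever visits the sink state $\bot$. A key structural observation is that by the first transition rule in \REFdef{def:synthUCA}, $\bot$ is absorbing: $\dap(\bot,(\ya,\ua))=\set{\bot}$ for every $(\ya,\ua)\in\Ya\times\Ua$. Hence for any infinite compliant path ``visits $\bot$'' and ``visits $\bot$ infinitely often'' are equivalent, which already reduces the second claim to showing that a compliant path reaches $\bot$ at all iff $\sigmaa\in\ON{IBlock}(\Sa)$.

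For the forward direction of the first claim I would take an infinite compliant path $\pia=\qa_0(\ya_0,\ua_0)\qa_1\ldots$ and split on whether it ever enters $\bot$. If not, then each $\qa_i=(\xa_i,q_i)\in\Xa\times Q$, and the second bullet of $\dap$ directly forces $\xa_0\in\Xao$, $\xa_{i+1}\in\Fa(\xa_i,\ua_i)$, and $\ya_i\in\Ha(\xa_i)$, so the projection $\xa_0\ua_0\xa_1\ldots$ is an infinite path of $\Sa$ generating $\sigmaa$, witnessing $\sigmaa\in\EPaths{\Sa}$. If $\pia$ first enters $\bot$ at step $m+1$, then $\qa_m=(\xa_m,q_m)$ and the first bullet of $\dap$ forces $\ya_m\in\Ha(\xa_m)$ and $\Fa(\xa_m,\ua_m)=\emptyset$; the projected prefix $\xa_0\ua_0\ldots\xa_m$ is a valid prefix of $\Sa$ generating $\sigmaa|_{[0,m]}$, with $\xa_m\in\LastSn{\Sa}{\sigmaa|_{[0,m]}}$. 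Since $\Enab$ is lifted to sets by \emph{intersection}, the existence of this single $\xa_m$ with $\Fa(\xa_m,\ua_m)=\emptyset$ already witnesses $\ua_m\notin\Enab_{\Sa}(\LastSn{\Sa}{\sigmaa|_{[0,m]}})$, so $\sigmaa\in\ON{IBlock}(\Sa)$.

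For the backward direction I would construct the required path explicitly. If $\sigmaa\in\EPaths{\Sa}$, pick any path $\xa_0\ua_0\xa_1\ldots\in\Paths{\Sa}$ with $\ya_i\in\Ha(\xa_i)$, choose $\i_i\in\Imapa{\ua_i}$ and $\o_i\in\Omapa{\ya_i}$ (both nonempty by assumption that the predicate maps cover their domains), and invoke completeness of $\Auts$ (\REFprop{prop:Auts}) to obtain an infinite compliant run $q_0q_1\ldots$ of $\Auts$ on $(\i_0,\o_0)(\i_1,\o_1)\ldots$. Zipping yields the infinite path $(\xa_0,q_0)(\xa_1,q_1)\ldots$ of $\Autap$ compliant with $\sigmaa$ and avoiding $\bot$. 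If instead $\sigmaa\in\ON{IBlock}(\Sa)$ with witness index $k$ and state $\xa_k\in\LastSn{\Sa}{\sigmaa|_{[0,k]}}$ at which $\ua_k$ is disabled, I would take a finite path in $\Sa$ ending at $\xa_k$, lift it to $(\xa_i,q_i)_{i\le k}$ by the same completeness argument, use the first bullet of $\dap$ to transition to $\bot$ (enabled precisely because $\ya_k\in\Ha(\xa_k)$ and $\Fa(\xa_k,\ua_k)=\emptyset$), and then loop at $\bot$ forever. This infinite compliant path visits $\bot$ infinitely often, simultaneously finishing the backward direction of the second claim.

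The main obstacle I expect lies in the careful index bookkeeping needed to align the ``first $\bot$-visit'' step in $\Autap$ with the witness index $k$ of \REFdef{def:IBlock}, and in invoking completeness of $\Auts$ to assert that an appropriate predicate sequence $(\i_i,\o_i)_i$ can always be chosen \emph{despite} the nondeterminism in $\Imapa{}$ and $\Omapa{}$. The remaining steps are mechanical unfoldings of the two transition bullets in \REFdef{def:synthUCA}.
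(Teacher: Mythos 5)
Your proof is correct and follows essentially the same route as the paper's: a case split on whether the infinite compliant path ever reaches the absorbing state $\bot$ for the forward directions, and lifting a (blocking or infinite) path of $\Sa$ to $\Autap$ via completeness of $\Auts$ for the backward directions, using the intersection-lifting of $\Enab$ to conclude membership in $\ON{IBlock}(\Sa)$. Your explicit choice of a witness state in $\LastSn{\Sa}{\sigmaa|_{[0,k]}}$ at which $\ua_k$ is disabled is, if anything, slightly more careful than the paper's wording of the same step.
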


\begin{proof}
\enquote{$\Leftarrow$}:
 First recall that $\Auts$ is complete in $\O\times\I$. With this, it follows from the definition of $\Autap$ that for any path of $\Sa$ there exists a path over $\Autap$ with the same length. With this, it immediately follows that whenever $\sigmaa\in\EPaths{\Sa}$, there is by definition an infinite path over $\Sa$ compliant with $\sigmaa$ and therefore this implies that there is an infinite compliant path over $\Autap$ as well. 
 
 Further, if $\sigmaa\in\ON{IBlock}(\Sa)$ instead, we know that there exists a prefix $\nu=\sigmaa|_{[0;k]}$ s.t.\ $\LastSn{\Sa}{\nu}\neq\emptyset$ and $\ua_k\notin\Enab(\LastSn{\Sa}{\nu})$. Then it again follows from the completeness of $\Auts$ that there exists a path over $\Autap$ that is compliant with $\nu$ and for which 
 $\ua_k\notin\Enab(\LastSn{\Sa}{\nu})$. This implies from the construction that $\qa_k=\bot$. As $\bot$ is absorbing this implies that there is an infinite compliant path for $\sigmaa$ that visits $\bot$ infinitely often.
 
 \enquote{$\Rightarrow$}: Let $\pia$ be an infinite path compliant with $\sigmaa$. This implies that either (i) there exists some $k\in\N$ s.t.\ $\qa_{k}=(\xa_k,q_k)\neq\bot$ and $\qa_{k'}=\bot$ for all $k'>k$ or (ii) $\xa_{k+1}\in\Fa(\xa_k,\ua_k)$ and $\ya_k\in\Ha(\xa_k)$ for all $k\in\N$. If (i) holds, we see that $\pia$ visits $\bot$ infinitely often. We also see that, by construction,$\Fa(\xa_k,\ua_k)=\emptyset$ and hence $\ua_k\notin \Enab(\xa_k)$. With this it follows from the lifting of $\Enab$ to sets via \emph{intersection} that $\ua_k\notin\Enab(\LastSn{\Sa}{\sigma|_{[0,k]}})$ while $\xa_k\in\LastSn{\Sa}{\sigmaa|_{[0,k]}}\neq\emptyset$. Therefore, $\sigmaa\in\ON{IBlock}(\Sa)$. If (ii) holds it immediately follows from the definition that $\pia\in\Paths{\Sa}$ and hence, $\sigmaa\in\EPaths{\Sa}$.
\end{proof}

After establishing \REFlem{lem:synthUCA_finite}, we see that all external sequences that only have \emph{finite} compliant paths over $\Autap$ must not be contained in $\EPaths{\Sa}$ and obviously also not in $\ON{IBlock}(\Sa)$. With this, it remains to show that for all sequences $\sigmaa\in\EPaths{\Sa}$ with a compliant infinite play over $\Autap$ holds that \emph{all} compliant sequences over $\Autap$ fulfill the specification. This is formalized in the following lemma.

\begin{lemma}\label{lem:synthUCA_spec}
Given the premises of \REFdef{def:synthUCA}, let $\sigmaa\in\EPaths{\Sa}$. Then $\sigmaa\in\Lang(\Autap)$ iff $\Mapa{\Extn{\Sa}^{-1}(\sigmaa)}\subseteq\semantics{\psi}$.
\end{lemma}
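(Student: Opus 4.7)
The plan is to establish a bijective correspondence between infinite runs of $\Autap$ compliant with $\sigmaa$ and triples $(\pia,\nu,\tau)$ consisting of a compatible abstract path, a generated predicate sequence, and a run of $\Auts$ on that sequence. The equivalence of the two acceptance conditions will then fall out by rewriting the universal quantifiers appropriately.

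First, I would fix $\sigmaa=\ya_0\ua_0\ya_1\ua_1\ldots\in\EPaths{\Sa}$ and analyze the shape of infinite runs of $\Autap$ on $\sigmaa$. Because $\sigmaa\in\EPaths{\Sa}$, I would invoke the construction of \REFdef{def:synthUCA} together with \REFlem{lem:synthUCA_finite} to argue that no infinite compliant run reaches the sink $\bot$: from any $(\xa_k,q_k)\neq\bot$ with $\ya_k\in\Ha(\xa_k)$ and $\ua_k\in\Enab_{\Sa}(\xa_k)$, the only transitions available lead to successor pairs $(\xa_{k+1},q_{k+1})\neq\bot$. I would then observe that such a run is fully determined by the data $(\xa_0,\ldots)$, $(q_0,\ldots)$ and the witnesses $(\mu_k,\lambda_k)$ picked from $\Imapa{\ua_k}\times\Omapa{\xa_k}$ used in the predicate-automaton transition. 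This data decomposes cleanly: the $\xa$-component yields a path $\pia=\xa_0\ua_0\xa_1\ldots\in\Extn{\Sa}^{-1}(\sigmaa)$, the witnesses yield a predicate sequence $\nu=\lambda_0\mu_0\lambda_1\mu_1\ldots\in\Mapa{\pia}$ (by definition of $\Mapa{}$), and the $q$-component yields a run $\tau=q_0q_1\ldots$ of $\Auts$ on $\nu$. Conversely, any such triple reassembles into a compliant run of $\Autap$. This bijection is the main technical step; the care required is in checking that every degree of non-determinism in $\dap$ (choice of $\xa_{k+1}$, of $\mu_k$, of $\lambda_k$, of $q_{k+1}$) is independently witnessed and matches exactly one degree of non-determinism in the triple.

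Next, I would translate the acceptance conditions through this bijection. Since $\acca=(\Xa\times\acc)\cup\{\bot\}$ and no run visits $\bot$, a run of $\Autap$ visits $\acca$ finitely often iff its $q$-projection $\tau$ visits $\acc$ finitely often, i.e.\ iff $\tau$ is accepting in $\Auts$. Using universality, $\nu\in\Lang(\Auts)$ iff every run of $\Auts$ on $\nu$ is accepting; by \REFprop{prop:Auts}, $\Lang(\Auts)=\semantics{\psi}$.

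Chaining the equivalences, I obtain: $\sigmaa\in\Lang(\Autap)$ iff every infinite run of $\Autap$ on $\sigmaa$ is accepting, iff for every $\pia\in\Extn{\Sa}^{-1}(\sigmaa)$, every $\nu\in\Mapa{\pia}$, and every run of $\Auts$ on $\nu$, that run is accepting, iff every $\nu\in\Mapa{\pia}$ for every $\pia\in\Extn{\Sa}^{-1}(\sigmaa)$ lies in $\semantics{\psi}$, which is exactly $\Mapa{\Extn{\Sa}^{-1}(\sigmaa)}\subseteq\semantics{\psi}$. The main obstacle, as noted, is bookkeeping the product decomposition carefully so that \emph{all} non-deterministic choices of $\Autap$ are accounted for — in particular ensuring the $\mu_k,\lambda_k$ witnesses inside $\dap$ enumerate exactly the predicate sequences in $\Mapa{\pia}$, not a proper subset or superset.
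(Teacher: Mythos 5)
Your proof is correct and follows essentially the same route as the paper: it decomposes the infinite runs of $\Autap$ compliant with $\sigmaa$ into a compatible abstract path, a predicate sequence in $\Mapa{\cdot}$, and a run of $\Auts$ on it, matches visits to $\acca$ with visits to $\acc$, and concludes via the completeness of $\Auts$, universal acceptance, and $\Lang(\Auts)=\semantics{\psi}$. If anything, you are more explicit than the paper about the converse direction of this correspondence --- that every compliant run of $\Autap$ avoids $\bot$ (via \REFlem{lem:synthUCA_finite}) and decomposes into such a triple --- which the paper's proof asserts only in passing.
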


\begin{proof}
First observe that the definition of $\Extn{\Sa}$ ensures that $\Extn{\Sa}^{-1}(\sigmaa)$ only contains \emph{infinite} paths $\nu=\xa_0\ua_0\xa_1\ua_1\hdots$ of $\Sa$ s.t.\ for all $k\in\N$ we have $\ya_k\in\Ha(\xa_k)$ and $\ua_k\in\Fa(\xa_k,\ua_k)$. As $\Auts$ is complete on $\O\times\I$ we know that for all $\rho\in\Mapa{\nu}$ (for any $\nu\in\Extn{\Sa}^{-1}(\sigmaa)$, i.e., for all $\rho\in\Mapa{\Extn{\Sa}^{-1}(\sigmaa)}$) there exists an \emph{infinite} path $\alpha=q_0(\o_0,\i_0)q_1\hdots$ over $\Auts$. Now it follows from the construction of $\Autap$ that for any such $\nu$ and $\alpha$ there is an infinite path $\pi=(\xa_0,q_0)(\ua_0,\ya_0)(\xa_1,q_1)\hdots$ over $\Autap$. We see that any such path $\pi$ is compliant with $\sigmaa$ and never visits $\bot$. We therefore know that $\pi$ visits $\acca$ infinitely often iff $\alpha$ visits $\acc$ infinitely often. With this we have $\sigmaa\in\Lang(\Autap)$ iff $\rho\in\Lang(\Auts)=\semantics{\psi}$. As this reasoning holds for all $\rho\in\Mapa{\Extn{\Sa}^{-1}(\sigmaa)}$ the claim follows.
\end{proof}

With this, \REFprop{prop:synthUCA_prop_extended} becomes a direct consequence of \REFlem{lem:synthUCA_finite} and \REFlem{lem:synthUCA_spec} as formalized in the final proof below.

% \begin{proposition}
% Given the premisses of \REFdef{}, let $\sigma=y_0u_0y_1u_1\hdots\in Y(UY)^\omega$ be an external sequence. Then 
% $\sigma\in\Lang(\Autap)$ iff either 
% \begin{compactenum}[(i)]
%  \item $\sigma\notin\EPaths{\Sa}\cup\ON{IBlock}(\Sa)$, or
%  \item $\sigma\in\EPaths{\Sa}$ and $\Mapa{\Extn{\Sa}^{-1}(\sigma)}\subseteq\semantics{\psi}$.c
% \end{compactenum}
% \end{proposition}
% 
\begin{proof}
  We proof both directions separately.\\
  \begin{inparaitem}[$\blacktriangleright$]
    \item \enquote{$\subseteq$}:
    Let $\sigmaa=\ya_0\ua_0\ya_1\ua_1\hdots\in\Lang(\Autap)$. This implies that all paths $\pia=\qa_0(\ya_0,\ua_0)\qa_1(\ya_1,\ua_1)\qa_2\hdots$ over $\Autap$ that are compliant with $\sigmaa$ are either finite or are infinite and visit $\acca$ only finitely often.
    
    Now it follows from \REFlem{lem:synthUCA_finite} that whenever \emph{all} compliant paths are finite, we know that $\sigmaa\notin\EPaths{\Sa}$ and $\sigmaa\notin\ON{IBlock}(\Sa)$ and $\EPaths{\Sa}$ and $\ON{IBlock}(\Sa)$ are disjoint. We now consider the case that there exists an infinite compliant path. As this path only visits $\acca$ finitely often it follows from \REFlem{lem:synthUCA_finite} that we again have $\sigmaa\notin\ON{IBlock}(\Sa)$.
    We therefore know that $\sigmaa\notin\EPaths{\Sa}$. With this, it follows from \REFlem{lem:synthUCA_spec} and the fact that $\sigmaa\in\Lang(\Autap)$ that we also have $\Mapa{\Extn{\Sa}^{-1}(\sigmaa)}\subseteq\semantics{\psi}$.\\
    \item \enquote{$\supseteq$}: We have $\sigmaa\notin\ON{IBlock}(\Sa)$ and consider two cases. If $\sigmaa\notin\EPaths{\Sa}$ it follows from \REFlem{lem:synthUCA_finite} that all paths over $\Autap$ that are compliant with $\sigmaa$ are finite, and hence $\sigmaa\in\Lang(\Autap)$ from the definition of acceptance of UCA. Now let $\sigmaa\in\EPaths{\Sa}$ and $\Mapa{\Extn{\Sa}^{-1}(\sigmaa)}\subseteq\semantics{\psi}$. Then it follows from \REFlem{lem:synthUCA_spec} that $\sigmaa\in\Lang(\Autap)$.
    \end{inparaitem}
\end{proof}

\subsection{Additional Proofs for \REFthm{thm:Casound_extended}}\label{sec:app:Casound}

% In this section we formalize the intuition behind \REFthm{thm:Casound}. In particular, we prove a stronger claim, formalized in the following 
% \REFthm{thm:Casound_extended}, from which \REFthm{thm:Casound} follows as a corollary.
% 
% 
% \begin{theorem}\label{thm:Casound_extended}
% Let $\Autap$ be the UCA induced by $\Sa$, $\psi$ and $\Mapa{}$, and $M$ be a non-empty maximal model $M$ of $\Autap$. Then (i) we have $\Ca\in\WIN(\Sa,\psi,\Mapa{})$ for every \emph{non-empty} controller $\Ca$ induced by $M$, and (ii) if no non-empty controller induced by $M$ exists, we have $\WIN(\Sa,\psi,\Mapa{})=\emptyset$.
% \end{theorem}

We prove part (i) and part (ii) of \REFthm{thm:Casound_extended} separately.

\smallskip
\noindent\textbf{\REFthm{thm:Casound_extended}, Part (i).}
This proof reduces to showing that $\Ca$ is feedback-composable with $\Sa$ and for all $\sigmaa\in\EPaths{S,\C}$ holds $\Mapa{\Extn{\Sa}^{-1}(\sigmaa)}\subseteq\semantics{\psi}$. We show this in two steps using \REFlem{lem:M_i} and \REFlem{lem:Ca_i} below.

\begin{lemma}\label{lem:M_i}
  Given the premises of \REFthm{thm:Casound_extended} it holds that $\sigmaa\in\EPaths{\Sa}$ and $\sigmaa\in\EPaths{M}$ implies $\Mapa{\Extn{\Sa}^{-1}(\sigmaa)}\subseteq\semantics{\psi}$.
\end{lemma}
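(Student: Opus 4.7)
The plan is to reduce the lemma to a direct application of \REFprop{prop:synthUCA_prop} (i), which asserts that $\sigmaa\in\EPaths{\Sa}\cap\Lang(\Autap)$ iff $\Mapa{\Extn{\Sa}^{-1}(\sigmaa)}\subseteq\semantics{\psi}$. Since the hypothesis $\sigmaa\in\EPaths{\Sa}$ is already available, the entire proof reduces to showing that $\sigmaa\in\Lang(\Autap)$.

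To obtain this membership from $\sigmaa\in\EPaths{M}$, I would unfold the definition of a maximal model. By that definition, every external sequence $\pi\in\EPaths{M}$ is compliant with some $\sigma\in\Lang(\Autap)$, and the corresponding path in the UCA's transition system admits no proper extension. In the applied setting of \REFthm{thm:Casound_extended} this lemma is invoked on infinite external sequences $\sigmaa\in\CPaths{\Sa,\Ca}\subseteq \Ya(\Ua\Ya)^\omega$, so the unique path labeled by $\sigmaa$ in $M$ (unique because $M$ is deterministic) is infinite. The compliance condition then forces $\sigmaa$ itself to equal some $\sigma\in\Lang(\Autap)$, giving $\sigmaa\in\Lang(\Autap)$.

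With $\sigmaa\in\EPaths{\Sa}\cap\Lang(\Autap)$ in hand, the conclusion $\Mapa{\Extn{\Sa}^{-1}(\sigmaa)}\subseteq\semantics{\psi}$ follows immediately from \REFprop{prop:synthUCA_prop} (i). The main conceptual content is therefore packaged inside the statements of \REFprop{prop:Realize} and \REFprop{prop:synthUCA_prop}, and this lemma serves as a short gluing step between them. The only subtlety to be careful about is the finite-versus-infinite distinction: $\Lang(\Autap)$ only contains infinite sequences, and the target inclusion is vacuous when $\Extn{\Sa}^{-1}(\sigmaa)=\emptyset$, so the non-trivial case is precisely the one where $\sigmaa$ is an infinite external sequence of $\Sa$ that is also realised as an infinite external path of $M$, which is exactly the case the determinism of $M$ and the maximal-model property jointly handle.
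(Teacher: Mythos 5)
Your proposal is correct and takes essentially the same route as the paper: the paper's proof likewise obtains $\sigmaa\in\Lang(\Autap)$ directly from the definition of a maximal model and then concludes via \REFlem{lem:synthUCA_spec}, which is the appendix statement underlying \REFprop{prop:synthUCA_prop}~(i) — you invoke the latter, using only its forward direction, so the content is identical. Your extra unfolding of the maximal-model definition (determinism of $M$, infinite compliant path) is just a more explicit version of the paper's one-line appeal to that definition.
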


\begin{proof}
First, it follows from the definition of \emph{maximal models} that $\sigmaa\in\Lang(\Autap)$. With this, it follows from \REFlem{lem:synthUCA_spec} that $\Mapa{\Extn{\Sa}^{-1}(\sigmaa)}\subseteq\semantics{\psi}$.
\end{proof}

\begin{lemma}\label{lem:Ca_i}
 Given the premises of \REFthm{thm:Casound_extended} let $\Ca$ be a non-empty controller induced by $M$. Then $\Ca$ is feedback-composable with $\Sa$ and for all $\sigmaa\in\EPaths{S,\C}$ holds $\Mapa{\Extn{\Sa}^{-1}(\sigmaa)}\subseteq\semantics{\psi}$.
\end{lemma}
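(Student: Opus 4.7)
The plan is to show that every closed-loop external sequence of $(\Sa,\Ca)$ is in fact an external sequence of $M$, so that Lemma \ref{lem:M_i} and Proposition \ref{prop:synthUCA_prop}~(ii) immediately deliver the two required conclusions.

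First I would establish by induction on $k$ the invariant that every $\nu\in\EPrefsk{\Sa,\Ca}{k}$ is a prefix of some external path of $M$. The base case uses the equality $\EPrefsk{\Sa,\Ca}{0}=\Ha(\Xao)$ together with non-emptiness of $\Ca$: the latter ensures that at least one initial observation launches the closed loop, and the contrapositive of condition~(ii) of Definition~\ref{def:MtoCa} places each such observation in $\EPrefs{M}$. The inductive step is where condition~(i) of Definition~\ref{def:MtoCa} does the work: whenever $\Ca$ prescribes $\ua=\Ca(\nu)$ and the next observation $\ya$ arises in $\Sa$, the pair $(\ya,\ua)$ is by construction an enabled transition of $M$ from the unique state $\LastSn{M}{\nu}$, so $\nu\,\ua\,\ya$ remains a prefix of an external path of $M$.

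Given this invariant, feedback-composability reduces to two facts. Condition~(ii) of Definition~\ref{def:MtoCa} (contrapositively) yields $\Ca(\nu)\neq\emptyset$ for every closed-loop prefix, since each such $\nu$ lies in $\EPrefs{M}$ by the invariant. For enabledness $\Ca(\nu)\in\Enab_{\Sa}(\LastSn{\Sa}{\nu})$, I would exploit maximality of $M$: the finite prefix $\nu\cdot\Ca(\nu)$ extends to an infinite external path $\sigmaa\in\EPaths{M}$, and by the definition of a maximal model $\sigmaa\in\Lang(\Autap)$. Proposition~\ref{prop:synthUCA_prop}~(ii) then gives $\sigmaa\notin\ON{IBlock}(\Sa)$, and combining this with $\nu\in\EPrefs{\Sa}$ and Definition~\ref{def:IBlock} forces $\Ca(\nu)\in\Enab_{\Sa}(\LastSn{\Sa}{\nu})$. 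For the specification clause, any $\sigmaa\in\EPaths{\Sa,\Ca}$ is infinite and, passing the invariant to the limit, lies in $\EPaths{M}$; Lemma~\ref{lem:M_i} then yields $\Mapa{\Extn{\Sa}^{-1}(\sigmaa)}\subseteq\semantics{\psi}$.

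The main obstacle will be the notational alignment between closed-loop prefixes of $\Sa$, which interleave observations and inputs as $\ya_0\ua_0\ya_1\hdots$, and external paths of $M$, whose alphabet $\Ya\times\Ua$ groups each observation with its input in a single transition. Beyond this bookkeeping, the non-emptiness hypothesis on $\Ca$ is essential in the base case: without it one cannot guarantee that the initial observations of $\Sa$ are aligned with a defined initial transition of $M$, and the induction would fail to get off the ground.
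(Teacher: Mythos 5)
You use the same toolbox as the paper (closed-loop traces tracked in $\EPaths{M}$, hence in $\Lang(\Autap)$, then \REFprop{prop:synthUCA_prop}~(ii) to exclude blocking and \REFlem{lem:M_i} for the specification clause), but your induction claims strictly more than the hypotheses give, and this is where the argument breaks. Your invariant asserts that \emph{every} $\nu\in\EPrefsk{\Sa,\Ca}{k}$ is a prefix of an external path of $M$; already for $k=0$ this means every initial observation in $\Ha(\Xao)$ must be matched by a transition of $M$ out of its initial state. Non-emptiness of $\Ca$ only guarantees that \emph{some} execution is launched, and neither \REFdef{def:MtoCa} nor the definition of a maximal model forces $M$ to be input-enabled with respect to $\Ya$, so there may be initial observations $\ya_0\in\Ha(\Xao)$ (and, at later steps, fresh observations produced by $\Sa$) for which $M$ has no matching transition and $\Ca$ is empty. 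Hence the base case of your invariant is unproven, and the subsequent step ``condition~(ii) of \REFdef{def:MtoCa} yields $\Ca(\nu)\neq\emptyset$ for every closed-loop prefix, since each such $\nu$ lies in $\EPrefs{M}$ by the invariant'' rests on that unproven premise.

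Two smaller points, and the comparison with the paper. First, to pass from $\Ca(\nu)\neq\emptyset$ to $\nu$ being tracked by $M$ you invoke the ``contrapositive of condition~(ii)''; the contrapositive of (ii) is $\nu\in\EPrefs{M}\Rightarrow\Ca(\nu)\neq\emptyset$, i.e.\ the opposite direction, and the implication you actually need is supplied by condition~(i). Second, condition~(i) pairs the \emph{last} observation of $\nu$ with the newly prescribed input $\Ca(\nu)$, not the prescribed input with the \emph{next} observation, so the grouping in your inductive step is reversed. The paper's own proof sidesteps the universal quantification entirely: it follows the single trace $\sigmaa\in\EPaths{\Sa}\cap\EPaths{M}$ guaranteed by non-emptiness of $\Ca$, concludes $\sigmaa\in\Lang(\Autap)$ and $\sigmaa\notin\ON{IBlock}(\Sa)$ via \REFlem{lem:synthUCA_finite}, obtains enabledness of the prescribed inputs along that trace, and only then applies \REFlem{lem:M_i}. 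If you restrict your induction to prefixes of executions on which $\Ca$ is defined (as the paper implicitly does), your argument goes through; as stated, the universal invariant would require an input-enabledness property of $M$ that the paper never establishes.
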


\begin{proof}
As $\Ca$ is non-empty, we know that there exists at least one $\sigmaa\in\EPaths{\Sa}$ s.t.\ $\sigmaa\in\EPaths{M}$ which in turn implies $\sigmaa\in\Lang(\Autap)$, and hence $\sigmaa\notin\ON{IBlock}(\Sa)$ (from \REFlem{lem:synthUCA_finite}).
It further follows from the construction of $\Ca$ in \REFdef{def:MtoCa} that there exists at least one such $\sigmaa$ s.t.\ for all $k\in\N$ we have $\Ca(\sigmaa|_{[0;k]})=\ua_k$. As we have $\sigmaa\notin\ON{IBlock}(\Sa)$ also for this $\sigmaa$, it follows that $\ua_k\in\Enab(\LastSn{\Sa}{\ya_0\ua_0\hdots\ya_k})$ for every $k$, which implies that $\Ca$ is feedback-composable with $\Sa$ and that $\sigmaa\in\EPaths{S,\C}$. It now follows from \REFlem{lem:M_i} and the observation that $\sigmaa\in\EPaths{\Sa}$ and $\sigmaa\in\EPaths{M}$ that $\Mapa{\Extn{\Sa}^{-1}(\sigmaa)}\subseteq\semantics{\psi}$.
\end{proof}

It is now easy to see that \REFthm{thm:Casound_extended} (i) follows directly from \REFlem{lem:Ca_i} and the definition of $\WIN(\Sa,\psi,\Mapa{})$.

\smallskip
\noindent\textbf{\REFthm{thm:Casound_extended}, Part (ii).}
We prove this claim by showing that whenever $\WIN(\Sa,\psi,\Mapa{})\neq\emptyset$ we know that there exists a non-empty output-feedback control strategy induced by $M$. We prove this claim in multiple steps.

\begin{lemma}\label{lem:CatoM_1}
 Let $\WIN(\Sa,\psi,\Mapa{})\neq\emptyset$. Then there exists a $\sigmaa\in\EPaths{\Sa}$ s.t.\ $\Mapa{\Extn{\Sa}^{-1}(\sigmaa)}\subseteq\semantics{\psi}$.
\end{lemma}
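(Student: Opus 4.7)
The plan is to extract the desired witness sequence directly from any winning controller. Concretely, pick $\Ca\in\WIN(\Sa,\psi,\Mapa{})$, which by definition is feedback-composable with $\Sa$ and satisfies $\Mapa{\CPaths{\Sa,\Ca}}\subseteq\semantics{\psi}$. The proof then reduces to two steps: (a) exhibit an infinite external closed-loop sequence $\sigmaa\in\EPaths{\Sa,\Ca}$, and (b) show that $\Extn{\Sa}^{-1}(\sigmaa)\subseteq\CPaths{\Sa,\Ca}$ so that $\Mapa{\Extn{\Sa}^{-1}(\sigmaa)}\subseteq\Mapa{\CPaths{\Sa,\Ca}}\subseteq\semantics{\psi}$.

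For step (a), I would argue by induction that $\EPrefsk{\Sa,\Ca}{k}\neq\emptyset$ for every $k\in\N$. The base case follows from $\Xao\neq\emptyset$ (needed for $\WIN(\Sa,\psi,\Mapa{})$ to be non-empty) together with the assumption $\Ha(\xa)\neq\emptyset$ for all $\xa$, giving $\EPrefsk{\Sa,\Ca}{0}=\Ha(\Xao)\neq\emptyset$. For the induction step, given $\nu\in\EPrefsk{\Sa,\Ca}{k}$, feedback-composability guarantees that $\ua:=\Ca(\nu)$ is defined and enabled in every state in $\LastSn{\Sa}{\nu}$, which in particular means $\Fa(\xa,\ua)\neq\emptyset$ for some $\xa\in\LastSn{\Sa}{\nu}$ and hence some $\ya\in\Ha(\xa')$ for $\xa'\in\Fa(\xa,\ua)$ yields $\nu\ua\ya\in\EPrefsk{\Sa,\Ca}{k+1}$. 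Choosing one such extension at every step produces an infinite $\sigmaa$ whose every finite prefix lies in $\EPrefs{\Sa,\Ca}$; by the definition of $\EPaths{\Sa,\Ca}$ this gives $\sigmaa\in\EPaths{\Sa,\Ca}$.

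For step (b), recall that $\CPaths{\Sa,\Ca}:=\Extn{\Sa}^{-1}(\EPaths{\Sa,\Ca})$, so any $\pia\in\Extn{\Sa}^{-1}(\sigmaa)$ with $\sigmaa\in\EPaths{\Sa,\Ca}$ is automatically in $\CPaths{\Sa,\Ca}$. Combining with the containment in $\semantics{\psi}$ guaranteed by $\Ca\in\WIN(\Sa,\psi,\Mapa{})$ yields the claim.

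The only subtle point — and the one I would spend most care on — is step (a), i.e.\ verifying that feedback-composability really produces an infinite external sequence rather than a sequence that gets stuck after finitely many steps. The key observation is that feedback-composability as stated in \REFsec{sec:prelim} requires $\Ca(\nu)$ to be defined and enabled in \emph{all} last states for every reachable prefix $\nu$; together with the standing assumption that $\Ha(\xa)\neq\emptyset$ for every state, this prevents any finite-length dead end in the construction and allows the inductive extension to go through. Everything else in the proof is bookkeeping between the definitions of $\EPaths{\Sa,\Ca}$ and $\CPaths{\Sa,\Ca}$.
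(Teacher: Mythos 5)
Your proposal follows essentially the same route as the paper's proof: take any $\Ca\in\WIN(\Sa,\psi,\Mapa{})$, produce an infinite external closed-loop sequence $\sigmaa$, observe that $\Extn{\Sa}^{-1}(\sigmaa)\subseteq\CPaths{\Sa,\Ca}$ by the definition of $\CPaths{\Sa,\Ca}$, and conclude via $\Mapa{\CPaths{\Sa,\Ca}}\subseteq\semantics{\psi}$. The only difference is that the paper simply picks an infinite $\pi\in\CPaths{\Sa,\Ca}$ and takes $\sigmaa\in\Extn{\Sa}(\pi)$, whereas you justify the existence of such a sequence by the non-blocking induction on $\EPrefsk{\Sa,\Ca}{k}$; that existence argument is left implicit in the paper, so making it explicit is a reasonable addition. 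Two points need repair, though. First, your justification of the base case is not right: $\WIN(\Sa,\psi,\Mapa{})\neq\emptyset$ does \emph{not} imply $\Xao\neq\emptyset$, since for $\Xao=\emptyset$ all conditions in the definitions of feedback-composability and of $\WIN$ hold vacuously; non-degeneracy of $\Xao$ is an implicit standing assumption (in the construction of \REFsec{sec:Abst:A} one has $\Xao=\Xa\neq\emptyset$), and the paper's own proof tacitly relies on it as well when it picks $\pi\in\CPaths{\Sa,\Ca}$. Second, the lemma asserts $\sigmaa\in\EPaths{\Sa}$, i.e., that $\sigmaa$ is generated by a (maximal, here infinite) path of $\Sa$, while your construction only delivers $\sigmaa\in\EPaths{\Sa,\Ca}$, i.e., that every finite prefix of $\sigmaa$ lies in $\EPrefs{\Sa}$. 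This is easy to close but should be said: either carry a concrete path along the induction (at step $k+1$ choose the state in $\LastSn{\Sa}{\nu}$ to be the successor $\xa'\in\Fa(\xa,\ua)$ selected at step $k$, so the chosen states chain into an infinite path of $\Sa$ generating $\sigmaa$), or use finiteness of $\Sa$ and a K\"onig-type argument. The membership $\sigmaa\in\EPaths{\Sa}$ matters because it is exactly what is consumed downstream when \REFlem{lem:synthUCA_spec} is applied in the proof of \REFlem{lem:CatoM_2}.
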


\begin{proof}
 If $\WIN(\Sa,\psi,\Mapa{})\neq\emptyset$ we know that there exists a control strategy $\Ca$ which is feedback composable with $\Sa$ and $\Map{\CPaths{\Sa, \Ca}}\subseteq\semantics{\psi}$. As $\semantics{\psi}\subseteq \O(\I\O)^\omega$ we know that $\CPaths{\Sa, \Ca}$ can only contain infinite paths. Now pick such an infinite $\pi\in\CPaths{\Sa, \Ca}$ and consider some $\sigmaa\in\Extn{\Sa}(\pi)$. Then we know from the definition of $\CPaths{\Sa, \Ca}$ that $\sigmaa\in\EPaths{\Sa, \Ca}$ and hence $\Extn{\Sa}^{-1}(\sigmaa)\subseteq\CPaths{\Sa, \Ca}$. With this, it follows that $\Mapa{\Extn{\Sa}^{-1}(\sigmaa)}\subseteq\semantics{\psi}$ which proves the claim.
\end{proof}

\begin{lemma}\label{lem:CatoM_2}
 If $\WIN(\Sa,\psi,\Mapa{})\neq\emptyset$ then $\EPaths{\Sa}\cap\Lang(\Autap)\neq \emptyset$.
\end{lemma}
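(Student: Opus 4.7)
The plan is to chain the two previously established facts directly. By \REFlem{lem:CatoM_1}, the assumption $\WIN(\Sa,\psi,\Mapa{})\neq\emptyset$ immediately yields the existence of an external sequence $\sigmaa\in\EPaths{\Sa}$ satisfying $\Mapa{\Extn{\Sa}^{-1}(\sigmaa)}\subseteq\semantics{\psi}$. So the only work left is to transport this $\sigmaa$ into $\Lang(\Autap)$.

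For that, I would invoke \REFprop{prop:synthUCA_prop}~(i), which states precisely that $\sigmaa\in\EPaths{\Sa}\cap\Lang(\Autap)$ is equivalent to $\Mapa{\Extn{\Sa}^{-1}(\sigmaa)}\subseteq\semantics{\psi}$. Applying the \enquote{$\Leftarrow$} direction to the $\sigmaa$ produced by \REFlem{lem:CatoM_1} directly gives $\sigmaa\in\EPaths{\Sa}\cap\Lang(\Autap)$, establishing that this intersection is non-empty.

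There is no real obstacle here: the content of the claim has essentially been pre-packaged by the earlier lemma and proposition. The main care point is simply to make clear that the $\sigmaa$ produced in \REFlem{lem:CatoM_1} already lies in $\EPaths{\Sa}$ (so the first half of the intersection is automatic), and that the semantic inclusion $\Mapa{\Extn{\Sa}^{-1}(\sigmaa)}\subseteq\semantics{\psi}$ is exactly the hypothesis needed to apply \REFprop{prop:synthUCA_prop}~(i) to conclude $\sigmaa\in\Lang(\Autap)$. The proof should be only two or three sentences long and serves mainly as a bridge from control-theoretic winning strategies to membership in the UCA language, preparing the ground for the final step of part (ii), where one would then appeal to the maximality of $M$ and \REFdef{def:MtoCa} to extract a non-empty induced controller $\Ca$.
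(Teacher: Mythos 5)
Your proposal is correct and follows essentially the same route as the paper: obtain the witness $\sigmaa\in\EPaths{\Sa}$ with $\Mapa{\Extn{\Sa}^{-1}(\sigmaa)}\subseteq\semantics{\psi}$ from \REFlem{lem:CatoM_1}, then conclude $\sigmaa\in\Lang(\Autap)$; the paper cites \REFlem{lem:synthUCA_spec} for this last step, while you cite \REFprop{prop:synthUCA_prop}~(i), which is just the main-text packaging of that same lemma (and your application is safe since your $\sigmaa$ is already known to lie in $\EPaths{\Sa}$).
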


\begin{proof}
 We know from \REFlem{lem:CatoM_1} that there exists at least one $\sigmaa\in\EPaths{\Sa}$ s.t.\ $\Mapa{\Extn{\Sa}^{-1}(\sigmaa)}\subseteq\semantics{\psi}$. Then it follows from \REFlem{lem:synthUCA_spec} that $\sigmaa\in\Lang(\Autap)$. This proves the claim.
\end{proof}

\begin{lemma}\label{lem:CatoM_3}
 If $\EPaths{\Sa}\cap\Lang(\Autap)\neq \emptyset$ then there exists a non-empty output-feedback control strategy induced by $M$.
\end{lemma}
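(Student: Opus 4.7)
Our plan is to exhibit a single prefix $\nu\in\EPrefs{\Sa}$ on which $\Ca$ is defined with a non-empty value, making $\Ca$ itself non-empty. The natural candidate is $\nu=\ya_0$, extracted from the witnessing sequence in the hypothesis.

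From the hypothesis, I would pick any $\sigmaa=\ya_0\ua_0\ya_1\ua_1\ldots\in\EPaths{\Sa}\cap\Lang(\Autap)$. Because $\sigmaa\in\EPaths{\Sa}$, the first letter satisfies $\ya_0\in\Ha(\Xao)$, hence $\ya_0\in\EPrefs{\Sa}$. Because $\sigmaa\in\Lang(\Autap)$, the language of $\Autap$ is non-empty, so by the bounded-synthesis guarantee from \REFprop{prop:Realize} the maximal model $M$ must realize the UCA. In particular, from its initial state $a_0$, $M$ emits a deterministic output $\ua_0^\star\in\Ua$ for the observation $\ya_0$, i.e., $|\alpha(a_0,(\ya_0,\ua_0^\star))|=1$. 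Then \REFdef{def:MtoCa}, applied with the length-one prefix $\nu=\ya_0$ (where $\LastSn{M}{\varepsilon}=a_0$), yields $\Ca(\ya_0)=\ua_0^\star\neq\emptyset$, proving that $\Ca$ is non-empty.

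If one wants to extend the witness along a full infinite path rather than only at the first step, one iterates the same construction: $M$ responds to successive observations with outputs $\ua_k^\star$, producing $\sigmaa^\star=\ya_0\ua_0^\star\ya_1^\star\ua_1^\star\ldots\in\EPaths{M}\subseteq\Lang(\Autap)$, and all its prefixes remain in $\EPrefs{\Sa}$ by an easy induction that couples \REFprop{prop:synthUCA_prop}(ii) (so that $\sigmaa^\star\notin\ON{IBlock}(\Sa)$ keeps $\ua_k^\star$ enabled in $\Sa$) with the transition structure of \REFdef{def:synthUCA} (so that observations stay $\Ha$-consistent with reachable states).

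The main obstacle is making the realizer step rigorous: the excerpt's explicit definition of \emph{maximal model} only asserts that paths of $M$ are accepted, whereas my argument needs that $M$ concretely responds to the specific observation $\ya_0$. This should be derived by unwinding the bounded-synthesis construction cited in \REFprop{prop:Realize}, where $M$ is a Mealy-style realizer that emits an output for every input observation, so that non-emptiness of $\Lang(\Autap)$ propagates to non-emptiness of $\EPaths{M}$ on each input letter. Once this is in place, the rest is bookkeeping about prefix membership in $\EPrefs{\Sa}$ and $\EPrefs{M}$, handled by the inductive coupling in the previous paragraph.
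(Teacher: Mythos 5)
There is a genuine gap. Your entire argument hinges on the claim that, from its initial state $a_0$, the maximal model $M$ responds to the specific observation $\ya_0$ with some $\ua_0^\star$ such that $|\alpha(a_0,(\ya_0,\ua_0^\star))|=1$. Nothing in the paper gives you this: the definition of a maximal model only constrains the external paths that $M$ \emph{does} have (they must be compliant with accepted sequences and non-extendable), and \REFprop{prop:Realize} merely asserts an algorithm to compute a maximal model \emph{if it exists}; in particular, non-emptiness of $\Lang(\Autap)$ does not imply realizability, let alone that $M$ is an input-enabled Mealy machine reacting to every observation. You flag exactly this step as the ``main obstacle'' and defer it to ``unwinding the bounded-synthesis construction,'' which is precisely the missing idea rather than a bookkeeping detail. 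The paper's proof avoids this entirely: from a witness $\sigmaa\in\EPaths{\Sa}\cap\Lang(\Autap)$, \REFlem{lem:synthUCA_finite} yields an \emph{infinite} path of $\Autap$ compliant with $\sigmaa$; maximality of $M$ then gives $\sigmaa\in\EPaths{M}$, and since also $\sigmaa\in\EPaths{\Sa}$, \REFdef{def:MtoCa} produces a non-empty induced controller. Your witness $\sigmaa$ is never used for more than its first letter, which is a sign the hypothesis is being bypassed.

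Moreover, even if your input-enabledness claim were granted, you would prove too weak (indeed, a trivializing) a statement. ``Non-empty controller induced by $M$'' is used downstream (proof of \REFlem{lem:Ca_i}) to extract an infinite sequence $\sigmaa\in\EPaths{\Sa}\cap\EPaths{M}$, i.e., a full closed-loop trace shared by $\Sa$ and $M$; your construction only exhibits a single one-step prefix on which $\Ca$ is defined. Worse, if $M$ really did respond to every observation, such a one-step-non-empty controller would exist \emph{unconditionally} (since $\Ha(\Xao)\neq\emptyset$), independently of the hypothesis $\EPaths{\Sa}\cap\Lang(\Autap)\neq\emptyset$, which would render this lemma and part (ii) of \REFthm{thm:Casound_extended} vacuous. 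Your sketched inductive extension to an infinite $\sigmaa^\star$ is the right instinct, but it again rests on the same unproven property of $M$ and on $\EPaths{M}\subseteq\Lang(\Autap)$ without invoking the mechanism the paper actually uses (\REFlem{lem:synthUCA_finite} plus maximality of $M$), so as written the proof does not go through.
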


\begin{proof}
As $\EPaths{\Sa}\cap\Lang(\Autap)\neq \emptyset$ we know that there exists a $\sigmaa\in\EPaths{\Sa}\cap\Lang(\Autap)$. With this, it follows from \REFlem{lem:synthUCA_finite} that there exists an infinite path $\pi$ over $\Autap$ that is compliant with $\sigmaa$. This implies that $\pi$ is also a path of $M$ and hence, $\sigmaa\in\EPaths{M}$. As we know that $\sigmaa\in\EPaths{\Sa}$ from above, we see that there exists a non-empty control strategy induced by $M$ from \REFdef{def:MtoCa}.
\end{proof}

With this, \REFthm{thm:Casound_extended}, Part (ii) becomes a direct consequence of \REFlem{lem:CatoM_2} and \REFlem{lem:CatoM_3}.

\end{document}